\newcommand{\abs}[1]{\left\vert#1\right\vert}
\newcommand{\br}[1]{\left\lbrace #1\right\rbrace}
\newcommand{\pa}[1]{\left( #1 \right)}
\newcommand{\re}{\exp}
\def\N{\mathbb{N}}
\def\R{\mathbb{R}}
\def\Q{\mathcal{Q}}
\def \E {\mathcal{E}}
\def\A{\mathcal{A}}
\def\1{\mathbbm{1}}
\def\d{\,\mathrm{d}}
\def \ddt{\frac{\mathrm{d}}{\mathrm{d}t}}
\numberwithin{equation}{section}
\def\:{\colon}
\newtheorem{thm}{Theorem}[section]
\newtheorem{cor}[thm]{Corollary}
\newtheorem{lem}[thm]{Lemma}
\newtheorem{prp}[thm]{Proposition}
\theoremstyle{definition}
\newtheorem{dfn}[thm]{Definition}
\theoremstyle{remark}
\newtheorem{rem}[thm]{Remark}
\theoremstyle{example}
\def \leq {\leqslant}
\def \geq {\geqslant}
\title{Uniform moment propagation for the Becker-Döring equation}
\author{Jos\'e A. Ca\~nizo, Amit Einav and Bertrand Lods}
\address{Departamento de Matem\'{a}tica
  Aplicada, Universidad de Granada, Av. Fuentenueva S/N, 18071
  Granada, Spain}
\email{canizo@ugr.es}
\address{Institut f\"ur Analysis und Scientific Computing,
  Technische Universit\"at Wien, Wiedner Hauptstrasse 8-10
  A-1040 Wien, \"Osterreich}
\email{aeinav@asc.tuwien.ac.at}
\address{Departement of Economics and Statistics  \& Collegio
  Carlo Alberto, Universit\`{a} degli Studi di Torino,  Corso Unione
  Sovietica, 218/bis, 10134 Torino, Italy}
\email{bertrand.lods@unito.it}
\begin{document}

\begin{abstract}
  We show uniform-in-time propagation of algebraic and stretched
  exponential moments for the Becker-Döring equations. Our proof is
  based upon a suitable use of the maximum principle together with
  known rates of convergence to equilibrium.
\end{abstract}
 
\maketitle
\tableofcontents

\section{Introduction}
\label{sec:intro}

In this note we consider  the Becker-Döring equations 
\begin{subequations}
  \label{eq:BD}
  \begin{align}
     \label{eq:BD1}
    \frac{\d}{\d t} c_i(t) &= W_{i-1}(t) - W_{i}(t), \qquad i\in \N\setminus \br{1},
    \\
    \label{eq:BD2}
    \frac{\d}{\d t} c_1(t) &= - W_1(t) - \sum_{k=1}^\infty W_{k}(t),
 \end{align}
\end{subequations}
where
\begin{equation}
  \label{eq:def-Wi}
  W_{i}(t) := a_{i}\, c_1(t) c_i(t) - b_{i+1}\, c_{i+1}(t)
  \qquad i \in \N.
\end{equation}

The unknowns here are the functions
$\bm{c}(t) = \left(c_{i}(t)\right)_{i\geq 1}$ which depend on time
$t \geq 0$ and where, for each $i \in \N$, $c_{i}(t)$ represents the
density of clusters of size $i$ at time $t \geq 0$ (that is, clusters
composed of exactly $i$ individual particles). The non-negative numbers
$a_{i}, b_{i}$ denote respectively the coagulation and fragmentation
coefficients. These equations are a model for the dynamics of cluster
growth in which clusters can only gain or shed one particle; that is,
the only reactions taking place are
\begin{equation*}
  \{i\} + \{1\} \underset{b_{i+1}}{\overset{a_i}{\leftrightharpoons}} \{i+1\},
\end{equation*}
where $\{i\}$ represents the concentration of clusters of size
$i$. The quantity $W_i$ then represents the net rate of this reaction,
obtained by standard mass-action dynamics. It is a well accepted model
for the kinetics of first order phase transitions, applicable to a
wide variety of phenomena such as crystallisation, vapor condensation,
aggregation of lipids or phase separation in alloys. The model is
traced back to \cite{BD35}, and the basis of its mathematical theory
was set in \cite{Ball1988Asymptotic,BCP86}. There have been a number
of works on the long-time behaviour of solutions, which is especially
interesting since it exhibits phase-change phenomena, metastability,
and fast relaxation to equilibrium depending on the regime one is
considering. We mention here the works by \cite{Canizo2013Exponential,Canizo2015Trend,JN03,MP2016,MP2017,N08,Pen97,
  Penrose1989Metastable,Sch16,Velazquez1998BeckerDoring},
leaving out many relevant ones. We direct the reader to the references
in the aforementioned works for a more complete picture, and to the
survey paper \cite{slemrod}.

Despite the amount of works devoted to the model, it seems to us that
the question of propagation of moments has not been fully answered,
and it is our purpose to fill that gap in this paper. The basic
question we address is the following: if
$\sum_{i=1}^\infty i^k c_i(0) < +\infty$ is finite for some $k > 1$,
is it true that $\sum_{i=1}^\infty i^k c_i(t) \leq C$ for some $C > 0$
and all $t \geq 0$? We show an affirmative answer for subcritical solutions, which is the natural case in which one expects it to hold.

Before describing our results with more detail we need to set some
notation and give some background on the asymptotic behaviour of
equation \eqref{eq:BD}.

\subsection{A quick summary on asymptotic behaviour}
\label{sec:prelim}

Equation \eqref{eq:BD} can be written, in weak form, as
\begin{equation}
  \label{eq:BD-weak}
  \ddt \sum_{i=1}^\infty c_i(t) \phi_i
  = \sum_{i=1}^\infty W_i(t) (\phi_{i+1} - \phi_i - \phi_1),
\end{equation}
for all slowly growing sequences $(\phi_i)_{i \geq 1}$. In particular,
taking $\phi_{i}=i$, one sees that the \emph{density} of the
solution, defined by
\begin{equation}
  \label{eq:def-density}
  \varrho :=
  \sum_{i=1}^\infty i c_i(0)
  = \sum_{i=1}^\infty i c_i(t),
\end{equation}
is formally conserved under time evolution. Defining the
\emph{detailed balance coefficients} $Q_i$ recursively by
\begin{equation}
  \label{eq:Qi}
  Q_1 = 1,
  \quad
  Q_{i+1} = \frac{a_i}{b_{i+1}} Q_i
  \quad i\in\N
\end{equation}
one can see that any sequence of the form $(Q_{i}z^{i})_{i\geq 1}$ is
formally an equilibrium of \eqref{eq:BD}. However, such a sequence may
not have a finite density. The largest $z_s \geq 0$ (possibly
$z_s=+\infty$) for which
\begin{equation}
  \label{eq:zs-def}
  \sum_{i=1}^{\infty} iQ_i z^i < +\infty
  \quad \text{ for all $0\leq z <z_s$}
\end{equation}
is called the \emph{critical monomer density}, or sometimes the
monomer saturation density (alternatively, $z_s$ is the radius of
convergence of the power series with coefficients $i Q_i$). The
\emph{critical density} (or, again, saturation density) is then
defined by
\begin{equation*}
  \label{eq:critical-mass}
  \varrho_s := \sum_{i=1}^\infty i Q_i z_s^i \in [0,+\infty].
\end{equation*}
This critical density plays a fundamental role in the long-time
behaviour of solutions to \eqref{eq:BD}: it was proved in \cite{BCP86}
and \cite{Ball1988Asymptotic} that any solution with density
$\varrho > \varrho_{s}$ will converge (in a weak sense) to the only
equilibrium with density $\varrho_{s}$, with the excess mass
$\varrho - \varrho_s$ becoming concentrated in larger and larger clusters as
time passes. In contrast, any solution with initial density
$\varrho \leq \varrho_{s}$ will converge (strongly) as $t \to \infty$
to an equilibrium solution with its same density $\varrho$. We focus
here on the so-called \emph{subcritical solutions} for which
$\varrho < \varrho_{s}$, which converge to the equilibrium
$\bm{\Q}:=\left(\Q_{i}\right)_{i\geq1}$ given by
$$\Q_{i}=Q_{i}\overline{z}^{i}, \qquad i\geq 1,$$ where
$\overline{z} \in [0,z_{s})$ is the unique number such that
$\varrho=\sum_{i=1}^\infty Q_i \overline{z}^i.$ The rate of
convergence to this equilibrium in exponentially weighted
$\ell_{1}(\N)$ norms was studied in \cite{JN03} and subsequently
improved in \cite{Canizo2013Exponential}. Convergence for solutions
with finite algebraic moments (which applies to a wider range of
initial conditions) has been studied in \cite{Canizo2015Trend,
  MP2016, MP2017}.

The approach in \cite{JN03} is based on the entropy-entropy production
method and has been recently revisited by the authors of the present
paper in \cite{Canizo2015Trend}. It consists in estimating in a
careful way the evolution of the \emph{relative free energy}
\begin{equation}
  \label{eq:relative-free-energy}
  H(\bm{c}(t) | \bm{\Q}) :=  \sum_{i=1}^\infty \left(c_{i}(t)	\log \frac{c_i(t)}{\Q_i} - c_{i}(t)+\Q_{i} \right), \qquad t \geq0.
\end{equation}
We observe that $H(\bm{c}(t) | \bm{\Q})$ is finite whenever the
solution $\bm{c}(t)=(c_i(t))_{i \geq 1}$ is nonnegative and has finite
density (see for example Lemma 7.1 and 7.2 in \cite{Canizo2007Convergence}). We refer to \cite{Canizo2015Trend} for
more details on the entropy-entropy production method in the context
of the Becker-D\"oring equations.

\subsection{Main results}

A fundamental tool in the application of the entropy method is a
\emph{uniform control} of suitable moments of the solution $\bm{c}(t)$
to \eqref{eq:BD}, i.e. the control of suitable weighted-$\ell_{1}(\N)$
estimates. For instance, the analysis of \cite{JN03} deals with
subcritical solutions with finite exponential moments and is based on
the property that
\begin{equation}\label{eq:expon}
  \sum_{i=1}^\infty \re\pa{\eta i} c_i(0) < +\infty \implies \sup_{t
    \geq 0}\sum_{i=1}^{\infty}\re\pa{\eta'\,i}c_{i}(t) < \infty
\end{equation}
for $\eta > 0$ and some $0 < \eta'< \eta$. This was proved in
\cite{JN03}, and is to our knowledge the only available result on
uniform propagation of moments.

We would like to have a similar information for \emph{algebraic
  moments} $\sum_{i \geq 1} i^k c_i(t)$ or \emph{stretched exponential
  moments} of the form $\sum_{i \geq 1} \exp\pa{\alpha i^\mu} c_i$, for
some $\alpha > 0$ and $0 < \mu < 1$. Propagation of these moments on a
finite time interval is known to hold from the results of \cite{BCP86}
(see Lemma \ref{lem:short-time} hereafter), but the estimate on the
time interval $[0,T]$ deteriorates as $T$ increases. We intend to fill
this blank with the uniform in time propagation results in the next
two theorems. For our results we assume that
\begin{equation}
  \label{eq:ai1}
  \begin{aligned}
    \text{either}
    \qquad
    &0 < a_i \leq \overline{a}\,i^\gamma
    \qquad
    &&\text{for all $i \geq 1$ and some $\overline{a} >
      0$,\quad $0 \leq \gamma < 1$}
    \\
    \text{or}
    \qquad
    &C_1 i \leq a_i \leq C_2 i
    \qquad
    &&\text{for all $i \geq 1$ and some $0 < C_1 \leq C_2$.}
  \end{aligned}
\end{equation}
If the second option holds we call $\gamma = 1$ for consistency. We
also assume that
\begin{equation}
  \label{eq:bi}
  0 < b_{i} \leq \overline{b}\,a_{i},
\end{equation}
for all $i \geq 1$ and some $\overline{b} > 0$, and that
\begin{equation}
  \label{eq:Qi-limit}
  \lim_{i \to +\infty} \frac{Q_{i+1}}{Q_{i}} = \frac{1}{z_s}
  \qquad \text{for some $0 < z_s < +\infty$.}
\end{equation}
(Note that $z_s$ is indeed the critical monomer density, in agreement
with \eqref{eq:zs-def}.)  Additionally we may assume that the critical
equilibrium is non-increasing:
\begin{equation}
  \label{eq:critical-eq}
  \text{The sequence $\{Q_i z_s^i\}_{i}$ is non-increasing,}
\end{equation}
though this is not a fundamental requirement and small changes can be
made to adapt the proofs if the sequence
$\{Q_i z_s^i\}_{i \geq i_{0}}$ is non-increasing only for some fixed
$i_0 \in \N$. \\
Assumptions \eqref{eq:ai1}-\eqref{eq:critical-eq} are
  natural and satisfied in most physically relevant situations. It is
  worth mentioning that some of the assumptions can follow from
  others, with additional conditions. For instance, if one assumes
  that $\frac{b_{i+1}}{b_i}$ is bounded from below then the definition
  of $Q_i$ and assumption \eqref{eq:Qi-limit} imply that \eqref{eq:bi}
  is satisfied. Common coefficients that appear in the theory of
  density conserving phase transitions \cite{BCP86,N08} are
  $$a_i=i^\gamma,\quad\quad b_i=a_i\pa{z_s+\frac{q}{i^{1-\mu}}},$$
  for some $0<\gamma \leq 1$, $q>0$, and $0<\mu<1$. A different
  modelling assumption yields the coefficients
  $$a_i=i^\gamma,\quad\quad b_i=z_s\pa{i-1}^\gamma e^{\sigma i^\mu-\sigma(i-1)^\mu},$$
  for some $0<\gamma\leq 1, \sigma>0$ and $0<\mu<1$. It is easy to
  verify that these type of coefficients satisfy all our assumptions.
Regarding the initial datum
$\bm{c}^0 = (c_i^0)_{i \geq 1}$, we assume it is non-negative and has
some finite moments:
\begin{equation}
  \label{eq:c0-moments}
  \sum_{i=1}^\infty i^r c_i^0 < +\infty
  \qquad \text{for $r = \max\{2-\gamma, 1+\gamma\}$.}
\end{equation}
With this at hand, we can now state our first result:
\begin{thm}[\textit{\textbf{Uniform propagation of moments}}]
  \label{thm:moment-propagation}
  Assume \eqref{eq:ai1}--\eqref{eq:critical-eq}, and let
  $\bm{c}(t) = (c_i(t))_{i \geq 1}$ be a solution to the Becker-Döring
  equations \eqref{eq:BD} with non-negative, subcritical initial datum
  $\bm{c}(0)$ and density $\varrho < \varrho_s$. Let
  $k \geq \max\{2-\gamma, 1 + \gamma\}$ be such that
  \begin{equation*}
    M_k(0) := \sum_{i=1}^\infty i^k c_{i}(0)  < \infty.
  \end{equation*}
  There exists a constant $C > 0$ depending only on $k$, $M_k(0)$, the
  density $\varrho$ and the coefficients $(a_i)_{i \geq 1}$,
  $(b_i)_{i \geq 1}$ such that
  \begin{equation*}
    M_k(t) := \sum_{i=1}^\infty i^k c_i(t)
    \leq
    C
    \qquad \text{for all $t \geq 0$.}
  \end{equation*}
\end{thm}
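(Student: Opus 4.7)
The plan is to combine the short-time propagation estimate (Lemma~\ref{lem:short-time}) with an \emph{a priori} differential estimate that becomes available once $c_1(t)$ is sufficiently close to the equilibrium monomer density $\bar z<z_s$, acting as a maximum-principle type dissipation on the moments. Starting from the weak form \eqref{eq:BD-weak} with $\phi_i=i^k$, I would differentiate $M_k$ and shift the fragmentation index via the detailed balance identity $a_iQ_i=b_{i+1}Q_{i+1}$ to obtain
\begin{equation*}
  \ddt M_k(t) = \sum_{i=1}^\infty g_i(t)\,c_i(t),
  \qquad
  g_i(t) := c_1(t)\,a_i\,\phi_i^\star - b_i\,\phi_{i-1}^\star,
\end{equation*}
with the convention $\phi_0^\star=0$ and $\phi_i^\star:=(i+1)^k-i^k-1$. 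Assumptions \eqref{eq:ai1}--\eqref{eq:Qi-limit}, together with \eqref{eq:critical-eq} (which forces $b_i\geq z_s\,a_{i-1}$ for every $i\geq 2$), combined with $\phi_i^\star=k\,i^{k-1}+O(i^{k-2})$, yield the asymptotic $g_i(t)\sim k\,(c_1(t)-z_s)\,i^{k-1+\gamma}$ as $i\to\infty$. In particular $g_i(t)\leq -c\,i^{k-1+\gamma}$ for $i$ large whenever $c_1(t)$ is bounded away from $z_s$.

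To secure this separation $c_1(t)<z_s$ I would invoke the convergence to equilibrium for subcritical solutions with finite algebraic moments from \cite{Canizo2015Trend}, which provides $c_1(t)\to\bar z$ as $t\to\infty$ with a quantitative rate governed by an initial moment. Combined with Lemma~\ref{lem:short-time} on the transient phase, this produces a time $T_0>0$, depending only on the parameters allowed in the theorem, such that $c_1(t)\leq z_\star$ for every $t\geq T_0$, with some $z_\star\in(\bar z,z_s)$ fixed. For $t\geq T_0$ the asymptotic sign of $g_i$, together with the bounded small-$i$ contribution (controlled via $c_1(t)\leq\varrho$ and $M_1(t)=\varrho$), yields the dissipation inequality
\begin{equation*}
  \ddt M_k(t) \leq C_1 - c_2\,M_{k-1+\gamma}(t), \qquad t\geq T_0.
\end{equation*}

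The final step is to close this into a uniform bound on $M_k$, using Hölder interpolation against the conserved mass $M_1=\varrho$: the hypothesis $k>2-\gamma$ ensures $k-1+\gamma\geq 1$, yielding both the trivial lower bound $M_{k-1+\gamma}\geq\varrho$ and the non-degeneracy of the moment interpolations relating $M_{k-1+\gamma}$ to $M_k$. Combining these with the dissipation estimate above, one converts it into an ODE inequality for $M_k$ whose positive solutions are uniformly bounded; chaining this with the finite value $M_k(T_0)$ from Lemma~\ref{lem:short-time} then delivers the desired constant $C$. The delicate point I anticipate is precisely this closure, since the dissipation acts through the lower moment $M_{k-1+\gamma}$ rather than through $M_k$ itself; an efficient way to perform it is likely to invoke a sharp quantitative form of the maximum principle for the linear birth-and-death type equation satisfied by $\tilde v_i:=c_i/\mathcal{Q}_i$, which is where the precise rates of convergence to equilibrium mentioned in the introduction enter decisively.
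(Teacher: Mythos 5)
Your plan correctly isolates the two auxiliary ingredients the paper also uses: the short-time estimate of Lemma~\ref{lem:short-time} and the fact (via quantitative convergence to equilibrium) that $c_1(t)\leq z_\star<z_s$ for $t\geq T_0$. Your computation of the sign of $g_i$ for large $i$ is also correct. But the closure step is a genuine gap, and it is precisely the one the paper flags in the ``Method of proof'' discussion as the reason the direct differential-inequality approach does not work. The dissipation you obtain,
\begin{equation*}
  \ddt M_k(t) \leq C_1 - c_2\, M_{k-1+\gamma}(t),
\end{equation*}
acts through the \emph{strictly lower} moment $M_{k-1+\gamma}$ (since $\gamma<1$), and log-convexity of moments with the conserved mass $M_1=\varrho$ gives only the upper bound $M_{k-1+\gamma}\leq \varrho^{\theta} M_k^{1-\theta}$. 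That is the wrong direction: it bounds $M_{k-1+\gamma}$ by $M_k$, not conversely, so you cannot replace $-M_{k-1+\gamma}$ by a decreasing function of $M_k$ and obtain a self-improving ODE. There is in fact no pointwise lower bound of $M_{k-1+\gamma}$ in terms of $M_k$ and $\varrho$: putting $c_1=\varrho-\varepsilon$ and a single far-out bump $c_N=\varepsilon/N$ with $\varepsilon=N^{-(k-2+\gamma)}$ keeps $M_1$ and $M_{k-1+\gamma}$ bounded while $M_k\sim N^{1-\gamma}\to\infty$. So the differential inequality simply does not close.

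The paper circumvents this by switching variables entirely. It introduces the tail density $G_j=\sum_{i\geq j}c_i$, which obeys the tridiagonal system
$\ddt G_j = a_{j-1}c_1(G_{j-1}-G_j)+b_j(G_{j+1}-G_j)$; once $c_1\leq\omega<z_s$ this becomes a differential \emph{inequality} whose generating matrix is Metzler (non-negative off-diagonal), hence order-preserving. The key then is Lemma~\ref{lem:finding-supersolutions}: given the profile $G_j(T_0)$, one constructs an explicit stationary supersolution $(r_j)$ of the frozen system, whose $\phi$-moment is controlled by that of $G_j(T_0)$ (this is where the condition $\omega<z_s$ and the limit $b_j/a_{j-1}\to z_s$ are used to produce geometric decay of the increments $s_j$). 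The maximum principle of Proposition~\ref{lem:maximum-principle} then traps $G_j(t)\leq r_j$ for all $t\geq T_0$, and the moment equivalence of Lemma~\ref{lem:propertiesofpartial} converts this into a uniform bound on $M_k(t)$. Your concluding remark about a ``sharp quantitative maximum principle'' for $c_i/\mathcal{Q}_i$ is pointing in the right spirit, but it is not what the paper does and you have not supplied the construction; as written the argument does not establish the theorem.
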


The constant $C$ can be estimated explicitly from the proof. Our
second result deals with the uniform propagation of stretched
exponential moments in a similar way:

\begin{thm}[\textit{\textbf{Uniform propagation of stretched
      exponential moments}}]
  \label{thm:moment-propagation2}
  Assume \eqref{eq:ai1}--\eqref{eq:critical-eq} hold, with the first
  option in \eqref{eq:ai1} being true (for some $0 \leq \gamma <
  1$).
  Let $\bm{c}(t) = (c_i(t))_{i \geq 1}$ be a solution to the
  Becker-Döring equations \eqref{eq:BD} with non-negative, subcritical
  initial datum $\bm{c}(0)$ and density $\varrho$. Let
  $0 < \mu \leq 1-\gamma$ and $\alpha > 0$ be such that
  \begin{equation*}
    \E_\mu(0) := \sum_{i=1}^\infty \re\pa{\alpha i^{\mu}} c_{i}(0)  < \infty.
  \end{equation*}
  There exists a constant $C > 0$ depending only on $\mu$, $\alpha$,
  $\E_\mu(0)$, the density $\varrho$ and the coefficients
  $(a_i)_{i \geq 1}$, $(b_i)_{i \geq 1}$ such that
  \begin{equation*}
    \E_\mu(t) := \sum_{i=1}^\infty \re\pa{\alpha i^{\mu}} c_i(t)
    \leq
    C
    \qquad \text{for all $t \geq 0$.}
  \end{equation*}
\end{thm}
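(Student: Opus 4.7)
The plan is to follow the approach used for Theorem~\ref{thm:moment-propagation}, replacing the algebraic weight $i^k$ by the stretched-exponential weight $\phi_i := \re(\alpha i^\mu)$. Lemma~\ref{lem:short-time} already gives $\E_\mu(t)<+\infty$ on every finite interval, so only uniform control as $t \to +\infty$ is at stake. Testing the weak form \eqref{eq:BD-weak} against $\phi_i$ produces
\[
\ddt \E_\mu(t) = \sum_{i=1}^\infty W_i(t)\,\Delta\phi_i, \qquad \Delta\phi_i := \phi_{i+1} - \phi_i - \phi_1.
\]
A one-line Taylor estimate gives $\phi_{i+1}-\phi_i \leq C_{\alpha,\mu}\, i^{\mu-1}\phi_i$ for all $i\geq 1$, so the compatibility assumption $\mu \leq 1-\gamma$ combined with $a_i \leq \overline{a}\,i^\gamma$ produces the key pointwise bound $a_i(\phi_{i+1}-\phi_i) \leq C\,\phi_i$. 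This is the stretched-exponential counterpart of the algebraic estimate $a_i\bigl((i+1)^k-i^k\bigr)\leq C\,i^{k+\gamma-1}$ driving the proof of Theorem~\ref{thm:moment-propagation}.

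Next I would rewrite $W_i$ around equilibrium through the detailed-balance identity $b_{i+1}\Q_{i+1}=a_i\Q_1\Q_i$, obtaining $W_i = a_i\Q_i\bigl(c_1 u_i - \Q_1 u_{i+1}\bigr)$ with $u_i := c_i/\Q_i$. This naturally splits $\ddt\E_\mu(t)$ into a \emph{dissipative} piece where $c_1$ is frozen at $\overline{z}=\Q_1$, plus a \emph{forcing} piece proportional to $c_1(t)-\overline{z}$ whose coefficient is an algebraic moment already controlled by Theorem~\ref{thm:moment-propagation}. After the shift of index $\sum_i b_{i+1}c_{i+1}\phi_i = \sum_{j\geq 2} b_j c_j\phi_{j-1}$, the dissipative piece should reveal a strictly negative leading contribution on large clusters, thanks to the geometric decay $\Q_{i+1}/\Q_i \to \overline{z}/z_s<1$ guaranteed by \eqref{eq:Qi-limit}. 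Using the known convergence $c_1(t)\to\overline{z}$ in weighted $\ell^1$-norms \cite{JN03,Canizo2013Exponential} to make the forcing piece arbitrarily small for $t\geq T_0$, one should then arrive at a differential inequality
\[
\ddt\E_\mu(t) \leq -\lambda\,\E_\mu(t) + K \qquad (t\geq T_0)
\]
for some $\lambda, K>0$, which together with the short-time bound on $[0,T_0]$ from Lemma~\ref{lem:short-time} yields the asserted uniform estimate.

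The main obstacle is extracting the dissipative sign in the second step. The pointwise estimate $a_i(\phi_{i+1}-\phi_i)\leq C\phi_i$ is signless on its own, so the argument has to rely on the combined effect of coagulation and fragmentation: after the index shift the full expression $\Delta\phi_i=\phi_{i+1}-\phi_i-\phi_1$ must deliver a negative leading contribution on large clusters strong enough to absorb both the $-\phi_1$ correction and the small-$i$ terms, where the effective weight difference may have the wrong sign. Keeping all constants explicit in their dependence on $c_1(t)-\overline{z}$---so that convergence to equilibrium (rather than smallness of the initial datum) can be invoked to close the estimate---is the delicate point, and is essentially the ``maximum-principle'' ingredient announced in the abstract.
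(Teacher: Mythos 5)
Your plan follows the direct differential-inequality route that the paper itself discusses and sets aside in the ``Method of proof'' subsection of Section~\ref{sec:intro}: test the weak formulation against $\phi_i=\re\pa{\alpha i^\mu}$ and try to close an inequality $\ddt\E_\mu(t)\leq -\lambda\,\E_\mu(t)+K$ once $c_1(t)$ is close to $\overline{z}$. You correctly identify the ancillary ingredients --- the bound $a_i(\phi_{i+1}-\phi_i)\leq C\phi_i$ for $\mu\leq 1-\gamma$, a time $T_0$ after which $c_1(t)<z_s$, and the short-time estimate of Lemma~\ref{lem:short-time} --- and you honestly flag the missing step, the dissipative sign. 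That flag points at a genuine obstruction. After the index shift, the leading negative contribution near equilibrium is of size $a_i\bigl(z_s-c_1(t)\bigr)(\phi_{i+1}-\phi_i)\sim i^{\gamma+\mu-1}\phi_i$. When $\gamma+\mu<1$ strictly (the generic case the theorem covers), this weight is strictly weaker than the $\phi_i$ entering $\E_\mu$: concentrating the mass near size $n$ gives $\E_\mu\sim n^{-1}\re\pa{\alpha n^\mu}$ while the dissipative sum is $\sim n^{\gamma+\mu-2}\re\pa{\alpha n^\mu}$, so their ratio grows like $n^{1-\gamma-\mu}\to\infty$. No uniform inequality $\ddt\E_\mu\leq -\lambda\,\E_\mu+K$ can follow, and the Gronwall argument does not close; moreover, under the first option in \eqref{eq:ai1} only an upper bound on $a_i$ is assumed, so the dissipation rate can be vanishingly small even at fixed $i$.

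The paper's proof sidesteps this by working with the tail density $G_j(t)=\sum_{i\geq j}c_i(t)$ rather than the scalar $\E_\mu(t)$. Once $c_1(t)\leq\omega<z_s$ for $t\geq T_0$ (from Theorem~\ref{thm:quantitative-convergence} and Corollary~\ref{cor:timeforc1}), $(G_j(t))_j$ satisfies the linear tridiagonal inequality \eqref{eq:diffinequalityforG} whose matrix has non-negative off-diagonal entries, so the \emph{pointwise-in-$j$} maximum principle of Lemma~\ref{lem:max-ode} and Proposition~\ref{lem:maximum-principle} compares $G_j(t)$ against a static supersolution $(r_j)$ built in Lemma~\ref{lem:finding-supersolutions}. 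The $\psi_j$-weighted moment of $(r_j)$, with $\psi_j=j^{\mu-1}\re\pa{\alpha j^\mu}$, is then controlled by $\E_\mu(\bm{c}(T_0))$ through Lemma~\ref{lem:propertiesofpartial}\,(iii), giving the uniform bound. Because this comparison is pointwise in $j$, no dissipation rate commensurate with $\phi_i$ is ever needed; all that is used is $b_j/a_{j-1}\to z_s>\omega$, which lets the supersolution be given geometric tails and hence a finite $\psi$-moment. Your reading of the ``maximum principle'' in the abstract as the bookkeeping of $c_1(t)-\overline{z}$ misidentifies its role: it is the sign-preserving property (Metzler structure) of the linear ODE system for the tail density, and it is precisely what replaces the Gronwall-type closure your proposal cannot obtain.
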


Notice that these two results prove \emph{uniform propagation} of the
considered moments whereas the result of \cite{JN03} recalled in
\eqref{eq:expon} is of a slightly different nature because of the
deterioration of the constant $\eta$ which measures the strength of
the exponential. We do not know whether uniform propagation is true
for exponential moments (that is, we do not know whether one can take
$\eta' = \eta$ in \eqref{eq:expon}); our method does not immediately
apply in this case since the short-time propagation in Lemma
\ref{lem:short-time} does not apply to exponential moments.

We mention here that, besides its own interest, Theorem
\ref{thm:moment-propagation} plays a crucial role in the determination
of the convergence rate to equilibrium for solutions to \eqref{eq:BD}
recently established in \cite{Canizo2015Trend}.

\subsection{Method of proof}

A natural attempt to prove the above results would be to directly
compute the evolution of $M_{k}(t)$ or $\E_{\mu}(t)$. Namely, picking
$\phi_{i}=i^{k}$ in the weak form \eqref{eq:BD-weak}, we get the
evolution of $M_{k}(t)$
\begin{equation*}
  \dfrac{\d}{\d t}M_{k}(t)
  =
  \sum_{i=1}^{\infty}\left(a_{i}c_{1}(t)c_{i}(t)-b_{i+1}c_{i+1}(t)\right)
  \left((i+1)^{k}-i^{k}-1\right),
\end{equation*}
and one may try to obtain a suitable differential inequality for $M_k$
in the spirit of similar results for kinetic equations (see
\cite{ACGM} for an example on the Boltzmann equation). This method is
rather efficient to obtain local in time bounds on $M_{k}(t)$ (or
$\E_{\mu}(t)$) but seems difficult to apply to get uniform bounds on
$[0,\infty)$. The difficulty stems from the fact that the ``loss
term'' $b_{i+1}c_{i+1}(t)$ appearing in the evolution does not always
compensate the ``gain term'' $a_{i}c_{1}(t)c_{i}(t)$. A deeper reason
for this is that boundedness of moments must depend on the mass of the
solution (since moments are never uniformly bounded for supercritical
solutions), so any estimate that gives uniform bounds must somehow
involve the mass of the solution. In practice, it is the value of
$c_1(t)$ that appears when one tries to bound the time evolution of
moments, and any uniform estimate seems to require some a priori
knowledge on the behaviour of $c_1(t)$. This is in contrast with the
situation for the Boltzmann equation (with hard potential
interactions) where the optimal Povzner's inequality allows us to
control the contribution of gain part of the collision operator by
that of its loss part (see for example \cite{Bobylev1996Moment}). It
should be remarked that the behaviour of moments for the Boltzmann
equation does not depend on the mass of the initial datum, but only on
which moments are initially finite, which is a fundamental difference
with the present case. Another important difference is the fact that
there is no creation of moments (of any kind) for the Becker-D\"oring
equations (see \cite{BCP86}).

We adopt here a different approach relying on a maximum principle. A
crucial role in our study will be played by the \emph{tail density}
$\mathcal{G}(t)=\left(G_{j}(t)\right)_{j\geq 1}$ given by
$$G_{j}(t)=\sum_{i=j}^{\infty}c_{i}(t), \qquad j \geq 1.$$
The main properties of $\mathcal{G}$ which are relevant for us are
established in Lemmas \ref{lem:propertiesofpartial} and
\ref{cor:diffinequalityforG}. Tail density was already introduced in
\cite{LPMS} in order to establish uniqueness of the solution to
\eqref{eq:BD} and a variant of it was used in
\cite{Canizo2005Asymptotic} to show strong convergence to equilibrium
for a generalised discrete coagulation--fragmentation model.

It is important to notice that moments of $\bm{c}(t)$ can be estimated
by suitable moments of $\mathcal{G}(t)$, so that Theorems
\ref{thm:moment-propagation}--\ref{thm:moment-propagation2} can be
stated in terms of moments of $\mathcal{G}(t)$ (the rough idea being
that the $k$-th moment of $\bm{c}$ is equivalent to the $(k-1)$-th
moment of $\mathcal{G}$; see Lemma \ref{lem:propertiesofpartial}). Of
course, the main interest is that the equation solved by
$\mathcal{G}(t)$ is somewhat simpler than \eqref{eq:BD}: one has
$$\ddt G_{j}(t)= a_{j-1}c_1(t)
\pa{G_{j-1}(t)-G_j(t)}+b_j \pa{G_{j+1}(t)-G_{j}(t)} \qquad j \geq 2.$$
The evolution equation for $\mathcal{G}(t)$ depends on $c_1(t)$, and
the entire nonlinear structure of the interaction between clusters is
driven by it (assuming $c_{1}(t)$ to be known in \eqref{eq:BD} would
yield a linear system of ODEs). Since the coefficient of $c_1(t)$ is
non-negative in the above equation, if one is able to control
$c_{1}(t)$ from above on some given interval, then one can bound the
above evolution of $\mathcal{G}(t)$ by a suitable infinite system of
differential inequalities, represented by an infinite matrix whose
off--diagonal entries are non-negative. This is the key ingredient
that yields a maximum principle for the evolution of $\mathcal{G}(t)$
(see Lemmas \ref{cor:diffinequalityforG} and \ref{lem:max-ode}). The
proof then consists in establishing the existence of suitable
supersolutions to the Becker-D\"oring equations whose moments are
strongly related to the moments of $\mathcal{G}(t)$ in order to apply
the maximum principle. As already said, this will be possible once a
suitable bound on $c_{1}(t)$ has been established. To prove an a
priori bound for $c_1(t)$ we resort to general results of
\cite{Canizo2007Convergence} (when $\gamma < 1$) and
\cite{Canizo2015Trend} (when $a_i \sim i$) where the rate of
convergence to equilibrium for solutions to \eqref{eq:BD} has been
established under mild assumptions on the initial data. Notice that
the rate obtained in \cite{Canizo2007Convergence} is far from being
optimal but applies to a wide range of initial data, and ensures at
least the existence of some explicit time $T >0$ such that
$c_{1}(t) < z_{s}$ for $t \geq T$. This is enough to apply the method
we just described.

\subsection{Organization of the paper}

In the next section we introduce the main tools for the proof of both
Theorems \ref{thm:moment-propagation} and \ref{thm:moment-propagation2},
namely the introduction of the tail density $\mathcal{G}(t)$ and the
maximum principle. The proofs of Theorems
\ref{thm:moment-propagation} and \ref{thm:moment-propagation2} are then
given in Section 3 after recalling the result on convergence to
equilibrium in \cite{Canizo2007Convergence}.

\section{Tail density and the maximum principle}
\label{sec:maximum}

A key idea for showing our main theorems is to find a quantity, which
we will call \emph{the tail density}, that obeys a maximum principle
for the equation and whose moments are intimately connected to the
moments of $\bm{c}(t)$, the solution to the Becker-D\"oring equations.
\begin{dfn}\label{def:partialdensity}
  Let $\bm{c}=\br{c_i}_{i\in\N}$ be a non-negative, summable
  sequence. We define the \emph{tail density of $\bm{c}$} as the
  sequence $\mathcal{G}=\br{G_{j}}_{j\in \N}$ given by
  \begin{equation}\label{eq:defpartial}
    G_j = \sum_{i=j}^\infty c_i,
    \qquad j \in \N.
  \end{equation}
\end{dfn}
The tail density enjoys the following properties:
\begin{lem}\label{lem:propertiesofpartial}
Let $\bm{c}=\br{c_i}_{i\in\N}$ be a non-negative, summable sequence. Then
\begin{enumerate}[(i)]
\item the tail density $\mathcal{G}=\br{G_j}_{j\in\N}$ is a non-negative, non-increasing sequence.
\item For any $k \geq 0$
\begin{equation}\label{eq:momentequivalence}
\frac{M_{k+1}\pa{\bm{c}}}{k+1} \leq M_{k}\pa{\mathcal{G}} \leq M_{k+1}\pa{\bm{c}}.
\end{equation}
\item Given $\gamma \in [0,1)$, let $\alpha > 0$ and $\mu \in (0,1-\gamma).$ Introduce 
$$\E_{\mu}(\bm{c})=\sum_{i\geq1}\re\pa{\alpha\,i^{\mu}}c_{i}.$$ 
Then, there exist $\eta_1,\eta_2>0$, depending only on $\mu$ and $\alpha$, such that
\begin{equation}\label{eq:doublebound}  
  \eta_{1}\sum_{j=1}^{\infty} \psi_j G_{j}
  \leq
  \E_{\mu}(\bm{c})
  \leq
  \eta_{2}\sum_{j=1}^{\infty} \psi_j G_{j}
  \end{equation}
where $\psi_{j}:= j^{\mu-1} \re\pa{\alpha j^{\mu}},$ for all $j \geq 1.$
\end{enumerate}
\end{lem}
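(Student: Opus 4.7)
The plan is to deduce all three parts directly from the formula $G_j = \sum_{i \geq j} c_i$ combined with Fubini's theorem, which is legitimate everywhere because all terms are non-negative. Part (i) is immediate: $G_j \geq 0$ from $c_i \geq 0$, and the telescoping identity $G_j - G_{j+1} = c_j \geq 0$ shows that $\mathcal{G}$ is non-increasing. For (ii), swapping the order of summation gives
\begin{equation*}
  M_k(\mathcal{G}) = \sum_{j=1}^\infty j^k \sum_{i=j}^\infty c_i = \sum_{i=1}^\infty c_i \sum_{j=1}^i j^k,
\end{equation*}
after which the elementary bounds $\tfrac{i^{k+1}}{k+1} \leq \sum_{j=1}^i j^k \leq i^{k+1}$ (the lower one from Riemann-sum comparison with $\int_0^i x^k\,\mathrm{d}x$, the upper one from majorising each summand by $i^k$) yield \eqref{eq:momentequivalence}.

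For (iii) the same exchange of summation reduces the task to proving
\begin{equation*}
  \eta_1 \re\pa{\alpha i^\mu} \leq \Psi_i \leq \eta_2 \re\pa{\alpha i^\mu}, \qquad i \geq 1,
\end{equation*}
where $\Psi_i := \sum_{j=1}^i \psi_j$. The key observation is that $\psi$ is, up to a constant, the derivative of the target function: $\frac{\mathrm{d}}{\mathrm{d}x}\re\pa{\alpha x^\mu} = \alpha\mu\,\psi(x)$. I would therefore telescope
\begin{equation*}
  \re\pa{\alpha (j+1)^\mu} - \re\pa{\alpha j^\mu} = \int_{j}^{j+1} \alpha\mu\,x^{\mu-1}\re\pa{\alpha x^\mu}\,\mathrm{d}x
\end{equation*}
and squeeze the integrand on each unit interval between constant multiples of $\psi_j$. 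This uses that $x \mapsto x^{\mu-1}$ is decreasing with $(j+1)^{\mu-1} \geq 2^{\mu-1} j^{\mu-1}$ for $j \geq 1$, together with the uniform ratio bound $\re\pa{\alpha(j+1)^\mu}/\re\pa{\alpha j^\mu} \leq \re\pa{\alpha\mu}$ that follows from the mean value estimate $(j+1)^\mu - j^\mu \leq \mu j^{\mu-1} \leq \mu$.

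Summing the resulting two-sided bound in $j$ from $1$ to $i$ produces constants $A_1,A_2>0$, depending only on $\alpha$ and $\mu$, such that $A_1 \Psi_i \leq \re\pa{\alpha(i+1)^\mu} - \re\pa{\alpha} \leq A_2 \Psi_i$. A final comparison of $\re\pa{\alpha(i+1)^\mu} - \re\pa{\alpha}$ with $\re\pa{\alpha i^\mu}$ --- whose quotient tends to $1$ as $i \to \infty$ and is strictly positive for every $i \geq 1$ --- then closes the argument. The only point needing a little extra care is the small-$i$ regime, where the subtracted constant $\re\pa{\alpha}$ is of the same order as $\re\pa{\alpha i^\mu}$; but in that bounded range the equivalence can simply be verified by hand, and all the constants produced depend only on $\alpha$ and $\mu$ as required.
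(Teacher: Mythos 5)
Your proof is correct. Parts (i) and (ii) are exactly the argument in the paper. For part (iii) you take a dual route: instead of Abel summation --- the paper writes $\E_\mu(\bm{c}) = \re\pa{\alpha} G_1 + \sum_{i\geq 2} G_i\bigl(\re\pa{\alpha i^\mu} - \re\pa{\alpha(i-1)^\mu}\bigr)$ and then bounds the increments $\re\pa{\alpha i^\mu} - \re\pa{\alpha(i-1)^\mu}$ above and below by constant multiples of $\psi_i$ --- you exchange summation to write $\sum_j \psi_j G_j = \sum_i c_i \Psi_i$ with $\Psi_i := \sum_{j\leq i}\psi_j$, and then establish $\Psi_i \asymp \re\pa{\alpha i^\mu}$ via the telescoping integral representation. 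Both arguments lean on the same mean-value / integral bound for $\re\pa{\alpha(j+1)^\mu} - \re\pa{\alpha j^\mu}$ and the same observation that $\re\pa{\alpha j^\mu - \alpha(j-1)^\mu} \to 1$, so the technical core is shared; the paper's version is a touch more direct (one comparison instead of your two-stage one, which also requires the small-$i$ check at the end), while yours has the structural advantage of treating (ii) and (iii) by the identical Fubini manipulation. Both are sound, and your constants depend only on $\alpha$ and $\mu$ as required.
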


\begin{proof}

  Point $(i)$ is clear from the definition of the tail density. To
  show $(ii)$ we notice that
  $$\frac{i^{k+1}}{k+1}=\int_0 ^ i x^k dx \leq \sum_{j=1}^i j^k \leq i^{k+1}, \qquad 	\forall i \geq 1.$$
  Since
  $$\sum_{j=1}^\infty j^k G_j = \sum_{j=1}^\infty j^k \pa{\sum_{i=j}^\infty c_i}=\sum_{i=1}^\infty c_i \pa{\sum_{j=1}^i j^k},$$
  where we were allowed to change summation due to the non-negativity
  of the elements, the proof of $(ii)$ complete.

  To prove point $(iii)$ we write $c_i = G_i - G_{i+1}$ to obtain
  \begin{multline}
    \label{eq:Emu}
    \E_\mu(\bm{c})
    = \sum_{i \geq 1} \re\pa{\alpha i^\mu} c_i
    = \sum_{i \geq 1} \re\pa{\alpha i^\mu} (G_i - G_{i+1})
    \\
    = \re\pa{\alpha} G_1 + \sum_{i \geq 2} G_i \left(
      \re\pa{\alpha i^\mu} - \re\pa{\alpha (i-1)^\mu}
    \right).
  \end{multline}
  Since
  $$\re\pa{\alpha i^\mu} - \re\pa{\alpha (i-1)^\mu}
  \leq
  \alpha \mu (i-1)^{\mu-1} \re\pa{\alpha i^{\mu}},
  \qquad i\geq 2
  $$
  we have
  \begin{equation}\label{eq:stretchedproofI}
    \begin{split}
      \E_{\mu}(\bm{c})
      &\leq
      \re\pa{\alpha} G_1
      + \alpha \mu \sum_{i=2}^\infty (i-1)^{\mu-1}
      \re\pa{\alpha i^\mu}G_i
      \\
      &\leq
     \re\pa{\alpha} G_1
      +2^{1-\mu} \alpha \mu  \sum_{i=2}^\infty i^{\mu-1}
      \re\pa{\alpha i^\mu}G_i      \\
      &\leq \max\pa{ 1,\ 2^{1-\mu} \alpha \mu}
      \sum_{j=1}^\infty \psi_j G_j.
    \end{split}
  \end{equation}
  In addition, 
  $$\re\pa{\alpha i^\mu} - \re\pa{\alpha (i-1)^\mu}
  \geq
  \alpha \mu i^{\mu-1} \re\pa{\alpha (i-1)^{\mu}},
  \qquad i\geq 2.
  $$
  Thus, using the fact that
  $\re\pa{ \alpha j^\mu - \alpha (j-1)^\mu}
    \underset{j\rightarrow \infty}{\longrightarrow}1$
    when $0<\mu<1$, from \eqref{eq:Emu} we conclude that
  \begin{equation}
    \label{eq:stretchedproofII}
    \begin{gathered}
      \E_{\mu}(\bm{c})\geq \re\pa{\alpha} G_1
      +
      C
      \sum_{i=2}^\infty
      i^{\mu-1} \re\pa{\alpha i^\mu} G_i
      \geq
      \min\{1, C\}
      \sum_{j=1}^\infty \psi_j G_j,
    \end{gathered}
  \end{equation}
  where
  $C = \alpha \mu\, \inf_{j\geq 2} \re\pa{\alpha
        (j-1)^\mu - \alpha j^\mu}$.
  This proves the result.
\end{proof}

%

We have also the following whenever $\bm{c}(t)$ is a solution to
\eqref{eq:BD}:
\begin{lem}
  \label{cor:diffinequalityforG}
  Let $\bm{c}(t)$ be a solution to the Becker-D\"oring equation with
  non-negative, finite density initial datum. Assume \eqref{eq:ai1} and \eqref{eq:bi} to
  hold. Then its associated tail density $\mathcal{G}(t)$ is
  continuously differentiable, and satisfies
  \begin{equation}\label{eq:derivativeofpartial}
    \frac{\d}{\d t}G_j(t)
    = a_{j-1}c_1(t)\pa{G_{j-1}(t)-G_j(t)}+b_j
    \pa{G_{j+1}(t)-G_{j}(t)},
    \qquad j\geq 1.
  \end{equation}
  In particular, if there exist $t_{0} >0$ and $\omega > 0$ such that
  $$c_1(t) \leq \omega \qquad \forall t \geq t_{0},$$ then 
  \begin{equation}
    \label{eq:diffinequalityforG}
    \frac{\d}{\d t}G_j(t) \leq a_{j-1}\omega\pa{G_{j-1}(t)-G_j(t)}+b_j \pa{G_{j+1}(t)-G_{j}(t)},\qquad \forall t\geq t_{0}.
  \end{equation}
\end{lem}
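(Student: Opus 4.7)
The plan is to differentiate the defining series $G_j(t) = \sum_{i=j}^\infty c_i(t)$ term by term, recognise the result as a telescoping sum, and then re-express $W_{j-1}(t)$ in terms of the tail density via the identity $c_i(t) = G_i(t) - G_{i+1}(t)$.

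First I would justify the interchange of the time derivative and the infinite sum, together with the vanishing of the telescoping remainder. Since $\bm{c}(t)$ solves \eqref{eq:BD}, each component $c_i$ is continuously differentiable with $\ddt c_i(t) = W_{i-1}(t) - W_i(t)$ for $i \geq 2$, and the standard a priori bound $\sum_{i \geq 1} i c_i(t) \leq \varrho$ holds for every $t \geq 0$; in particular $c_1(t) \leq \varrho$. The hypotheses \eqref{eq:ai1} and \eqref{eq:bi} then yield
\begin{equation*}
  \abs{W_i(t)} \leq \overline{a}\,i^\gamma c_1(t) c_i(t) + \overline{a}\,\overline{b}\,(i+1)^\gamma c_{i+1}(t) \leq K \pa{i^\gamma c_i(t) + (i+1)^\gamma c_{i+1}(t)}
\end{equation*}
for some constant $K$ depending only on $\overline{a},\overline{b},\varrho$. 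Since $\gamma \leq 1$, summability of $\{i c_i(t)\}_{i}$ forces summability of $\{i^\gamma c_i(t)\}_{i}$, so $\sum_i \abs{\ddt c_i(t)}$ converges at each $t$ and $W_N(t) \to 0$ as $N \to \infty$.

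Next I would carry out the telescoping computation: for $j \geq 2$,
\begin{equation*}
  \ddt G_j(t) = \sum_{i=j}^\infty \pa{W_{i-1}(t) - W_i(t)} = W_{j-1}(t) - \lim_{N \to \infty} W_N(t) = W_{j-1}(t),
\end{equation*}
and substituting $c_{j-1} = G_{j-1} - G_j$ and $c_j = G_j - G_{j+1}$ into $W_{j-1} = a_{j-1}c_1 c_{j-1} - b_j c_j$ produces precisely \eqref{eq:derivativeofpartial}. Continuous differentiability of $G_j$ follows from the continuity of $c_1(t)$ and of each tail $G_i(t)$, which makes the right-hand side of \eqref{eq:derivativeofpartial} continuous in $t$. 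As the subsequent applications only invoke the formula for $j \geq 2$, I focus on that range; the boundary case $j = 1$ can be treated separately from \eqref{eq:BD2}.

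Finally, the inequality \eqref{eq:diffinequalityforG} is immediate from \eqref{eq:derivativeofpartial}: Lemma \ref{lem:propertiesofpartial}(i) gives $G_{j-1}(t) - G_j(t) = c_{j-1}(t) \geq 0$, so since $a_{j-1} \geq 0$ one may replace $c_1(t)$ by its upper bound $\omega$ on $[t_0, \infty)$ without decreasing the right-hand side. The main obstacle is the uniform-in-$i$ control needed to justify the termwise differentiation and the vanishing $W_N(t) \to 0$; both rest exclusively on the finite-density a priori bound $\sum i c_i(t) \leq \varrho$ combined with the growth conditions on $a_i$ and $b_i$. Once these are in place, everything that follows is essentially algebraic rearrangement.
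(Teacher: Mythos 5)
Your overall approach is the same as the paper's: differentiate the tail series term by term, telescope, and rewrite $W_{j-1}$ via $c_i = G_i - G_{i+1}$; the algebraic rearrangement and the passage from \eqref{eq:derivativeofpartial} to \eqref{eq:diffinequalityforG} are correct. The gap is in the one genuinely analytic step, the justification of $\ddt\sum_{i\geq j} c_i(t) = \sum_{i\geq j}\ddt c_i(t)$. You establish that $\sum_i\abs{\ddt c_i(t)}$ converges at each fixed $t$, but pointwise convergence of the derivative series does not license termwise differentiation: the standard theorem requires the series $\sum_i\ddt c_i(t)$ to converge \emph{uniformly} on compact time intervals. The snapshot bound $\sum_i i\,c_i(t)\leq\varrho$ at each $t$ gives boundedness but not uniform tail control; a priori, mass could drift to larger $i$ as $t$ ranges over a compact interval, so that $\sum_{i\geq N} i^\gamma c_i(t)$ need not tend to $0$ uniformly in $t$. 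Your closing remark that the argument "rests exclusively" on the finite-density bound is therefore misleading: it is uniformity in $t$ of the tail, not summability in $i$ at each $t$, that is missing.

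The paper closes this gap by invoking Proposition 3.1 of \cite{BCP86}, which asserts that $\sum_i i\,c_i(t)$ converges uniformly on compact time intervals, and hence so does the dominating series $C\sum_i i^\gamma c_i(t)$ for $\sum_i\abs{\ddt c_i(t)}$. An alternative route within your set-up is to work in integral form: for $j\geq2$,
\begin{equation*}
  \sum_{i=j}^N\bigl(c_i(t)-c_i(0)\bigr)=\int_0^t\bigl(W_{j-1}(s)-W_N(s)\bigr)\d s,
\end{equation*}
and since $\abs{W_N(s)}\leq \overline{a}\varrho^2+\overline{a}\overline{b}\varrho$ uniformly in $N$ and $s$ while $W_N(s)\to0$ pointwise, dominated convergence gives $G_j(t)-G_j(0)=\int_0^t W_{j-1}(s)\d s$; continuity of $W_{j-1}$ then yields $C^1$ regularity and the formula. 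Either way, the uniformity/domination step must be made explicit. Your observation that the $j=1$ case is not literally covered by the telescoping (because $c_1$ obeys \eqref{eq:BD2}, not \eqref{eq:BD1}) and that the rest of the paper only uses $j\geq2$ is a fair reading of the statement.
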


\begin{proof}

  We notice that for any $k,N\in\N$ with $1 <k \leq N$
  $$\sum_{i=k}^N\abs{\frac{\d}{\d t}c_i(t)}= \sum_{i=k}^N\big|a_i c_1(t)c_i(t) - b_{i+1}c_{i+1}(t)-a_{i-1}c_{1}(t)c_{i-1}(t)+b_{i}c_{i}(t)\big| $$
  $$\leq 2\overline{a}\varrho \sum_{i=k-1}^N i^ \gamma c_i(t) + 2\overline{b}\overline{a} \sum_{i=k}^{N+1} i^{\gamma}c_i(t) \leq C\sum_{i=k-1}^{N+1} i^\gamma c_i(t)$$
where we used \eqref{eq:ai1} and \eqref{eq:bi}.   
 Recalling now that  $\sum_{i=1}^\infty ic_i(t)$ converges uniformly on any interval thanks to Proposition 3.1 in \cite{BCP86}, we deduce that $\sum_{i=1}^\infty \frac{\d}{\d t}c_i(t)$ converges
  uniformly on any interval for any $\gamma \in [0,1]$. Thus, since $\bm{c}(t)$ is continuously differentiable, so
  it $\mathcal{G}(t)$, and
  $$\frac{\d}{\d t}G_j(t)
  = \sum_{i=j}^\infty \frac{\d}{\d t}c_i(t)
  = \sum_{i=j}^\infty \pa{W_{i-1}(t)-W_{i}(t)} = W_{j-1}(t)$$
  completing the proof.   The second assertion follows immediately from
  \eqref{eq:derivativeofpartial} and the non-negativity of all the
  elements involved.
\end{proof}

Looking at inequality \eqref{eq:diffinequalityforG} we notice that the
infinite system of differential inequalities for tail densities can be
represented by an infinite constant matrix with entries only in the
diagonal, and above and below it. Moreover, the off--diagonal entries
are non-negative. Unsurprisingly, this will entail a \emph{maximum
  principle} to the system.

For any given vectors $\bm{u},\bm{v}\in \R^n$, we denote by
$\bm{u} \leq \bm{v}$ the case where $u_i \leq v_i$ for all
$i=1,\dots,n$. Given $z \in \R$, we also denote $z_{+}=\max(z,0)$ and,
if $\bm{u}=(u^{1},\ldots,u^{n}) \in \R^{n}$, we set
$\bm{u}_{+}=(u^{1}_{+},\ldots,u^{n}_{+}).$

\begin{lem}[\textbf{\textit{Maximum principle for linear ODE systems}}]
  \label{lem:max-ode}
  Let $T > 0$ and consider the vector of continuously differentiable
  functions $\bm{u}=(u_1, \dots, u_n) \: [0,T) \to \R^n$. Assume
  that \begin{equation}
    \label{eq:vector-dif-ineq}
    \ddt \bm{u}(t) \leq A \bm{u}(t) \qquad \text{ for all }  t \in [0,T),
  \end{equation}
  where $A$ is a constant $n \times n$ matrix whose off-diagonal
  entries are non-negative. Then, if $\bm{u}(0) \leq 0$ we have that
  $\bm{u}(t) \leq 0$ for all $t \in [0,T)$.
\end{lem}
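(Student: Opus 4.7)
The plan is to reduce the differential inequality to the fact that the fundamental solution $e^{tA}$ preserves the non-negative orthant, and then to use a Duhamel/variation of constants argument. First I would observe that, because the off-diagonal entries of $A$ are non-negative, the shifted matrix $B := A + \mu I$ is entrywise non-negative for any $\mu$ with $\mu \geq \max_{1\leq i \leq n} \abs{A_{ii}}$. The series expansion $e^{tB} = \sum_{k \geq 0} \frac{(tB)^k}{k!}$ is then a sum of entrywise non-negative matrices, hence $e^{tB}$ itself has non-negative entries, and the identity $e^{tA} = e^{-\mu t}\, e^{tB}$ shows that $e^{tA}$ maps non-negative vectors to non-negative vectors for every $t \geq 0$.

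Next I would introduce the continuous function $\bm{f}(t) := A\bm{u}(t) - \ddt \bm{u}(t)$, which is componentwise non-negative on $[0,T)$ by the assumption \eqref{eq:vector-dif-ineq}. Then $\bm{u}$ satisfies the linear ODE $\ddt \bm{u}(t) = A\bm{u}(t) - \bm{f}(t)$ with initial datum $\bm{u}(0)$, whose unique solution is given by the variation of constants formula
\begin{equation*}
\bm{u}(t) = e^{tA}\bm{u}(0) - \int_0^t e^{(t-s)A} \bm{f}(s) \, \d s,
\qquad t \in [0,T).
\end{equation*}
Since $\bm{u}(0) \leq 0$ componentwise and $e^{tA}$ preserves the non-positive orthant by the first step, the first term is $\leq 0$; since $\bm{f}(s) \geq 0$ and $e^{(t-s)A}$ preserves the non-negative orthant, the integrand is $\geq 0$, so the integral term contributes $\leq 0$ as well. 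Adding them yields $\bm{u}(t) \leq 0$ for every $t \in [0,T)$.

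There is no real obstacle beyond the non-negativity of $e^{tA}$, which is the only place the sign assumption on the off-diagonal entries of $A$ enters; the remainder of the argument is just the standard Duhamel representation for the linear inhomogeneous ODE. Should one prefer to avoid Duhamel, an equivalent route is a first-exit-time argument: set $\bm{u}^{\varepsilon}(t) := \bm{u}(t) - \varepsilon e^{Kt}\bm{1}$ with $K$ chosen strictly larger than $\max_j \sum_i \abs{A_{ji}}$, observe that at any hypothetical first time $t^{\star}$ where some component of $\bm{u}^{\varepsilon}$ vanishes the inequality \eqref{eq:vector-dif-ineq} forces $\ddt u^{\varepsilon}_j(t^{\star}) < 0$, contradicting the definition of $t^{\star}$; letting $\varepsilon \downarrow 0$ then gives $\bm{u}(t) \leq 0$. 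I would nevertheless present the Duhamel version since it is shorter and makes the structural reason for the maximum principle (positivity preservation of $e^{tA}$) transparent.
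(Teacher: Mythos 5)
Your Duhamel argument is correct and complete: the Metzler structure of $A$ makes $e^{tB}$ entrywise non-negative for $B=A+\mu I$ with $\mu\geq\max_i\abs{A_{ii}}$, hence $e^{tA}=e^{-\mu t}e^{tB}$ preserves the non-negative (and non-positive) orthant; writing the differential inequality as the ODE $\ddt\bm{u}=A\bm{u}-\bm{f}$ with a continuous, non-negative source $\bm{f}$ and invoking variation of constants then gives $\bm{u}(t)=e^{tA}\bm{u}(0)-\int_0^t e^{(t-s)A}\bm{f}(s)\,\d s\leq 0$. This is, however, a genuinely different route from the paper's. The paper works at the level of the sub-tangent (Nagumo-type) condition: it expands $\bm{u}(t+s)\leq(I+sA)\bm{u}(t)+o(s)$, uses that $I+sA$ is entrywise non-negative for small $s>0$ to push the estimate through to the positive parts, $[\bm{u}(t+s)]_+\leq(I+sA)[\bm{u}(t)]_++o(s)$, and then tracks the scalar quantity $y(t)=\sum_j[u_j(t)]_+$, which is shown to satisfy a one-sided Dini-derivative Gronwall inequality $D^+y\leq Cy$, whence $y\equiv 0$. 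The two proofs exploit the same structural fact (non-negativity of $I+sA$ for small $s$, equivalently of $e^{tA}$) but package it differently: your version isolates the positivity preservation of the semigroup and is shorter and more transparent once Duhamel is available; the paper's version is more elementary in that it never solves the ODE explicitly, relying instead on a comparison lemma for Dini derivatives, and this keeps the argument self-contained within the "differential inequality plus Gronwall" framework the paper emphasizes (cf.\ their remark that they give a full proof precisely because they work with an inequality rather than an equality). Your alternative first-exit-time sketch with the barrier $\varepsilon e^{Kt}\bm{1}$ is also correct and is essentially a third standard packaging of the same Metzler-matrix sign argument.
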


\begin{proof}
  Let $0 \leq t <T$ be given. Since $\bm{u}$ is differentiable at $t$
  we have that for $0<s<T-t $
  \begin{equation*}
    \bm{u}(t+s) \leq \bm{u}(t) + s A \bm{u}(t) + o(s)
    = (I + s A) \bm{u}(t) + o(s).
  \end{equation*}
 Set $A=(a_{i,j})_{i,j=1,\ldots,n}$ and call $s_0 := \inf_{i=1,\dots,n} |a_{i,i}|^{-1} \in (0,+\infty]$. As
  the off-diagonal entries of $A$ are
  non-negative we find that for $0 < s < s_* := \min\{ s_0, T-t \}$,
  all the entries of the matrix $I+sA$ are
  non-negative. Thus, \begin{equation*} [\bm{u}(t+s)]_+ \leq [(I + s
    A) \bm{u}(t)]_+ + o(s) \leq (I + s A) [\bm{u}(t)]_+ + o(s)
  \end{equation*}
  for all $0 < s < s_*$. Denoting by $y(t)$ the $\ell_{1}$-norm of
  $[\bm{u}(t)]_{+}$, i.e. $y(t) := \sum_{j=1}^n u^j_{+}(t)$, we see
  that
  \begin{equation*} 
    y(t+s) \leq
    y(t) + s C y(t) + o(s),
  \end{equation*}
  where 
$$C=\max_{i=1,\dots,n} \sum_{j=1}^n \abs{a_{i,j}}.$$  
Dividing by $s$ and
  taking the limit as $s \to 0$ we see that
  \begin{equation*}
   \liminf_{s\rightarrow 0^+}  \frac{y(t+s)-y(t)}{s}  \leq
    C y(t).
  \end{equation*}
  A generalised version of Gronwall's lemma, following from a generalised comparison theorem that can be found in Lemma 16.4, p. 215 of \cite{Amann90}, implies that
$$y(t) \leq y(0) \re\pa{Ct}.$$
Since $y(0)=0$ we conclude the proof.  
\end{proof}
\begin{rem}
  The above proof is a simple version of the invariance of the cone of
  points with non-positive coordinates using the so-called
  \textit{sub-tangent condition} (as given for example in Theorem 16.5, p. 215 of 
  \cite{Amann90}). A matrix with non-negative
  off--diagonal entries is known as a \emph{Metzler matrix}, and its
  sign-preserving properties are well-known. We have given a full
  proof for the sake of completeness, and since we need the result
  when we deal with an inequality (and not an equality).
\end{rem}

In order to use this maximum principle we define the notion of
\emph{supersolution} for the Becker-Döring equations:

\begin{dfn}[\textit{\textbf{Supersolution}}]\label{def:super}
  Let $0 < \varrho$ and $0 <\omega $ be given. We
  say that a non-negative sequence $(r_j)_{j \geq 1}$ is a
  \emph{$(\omega,\varrho)$-supersolution} to Becker-D\"oring equations
  if
  \begin{enumerate}
  \item $r_{1}\geq \varrho$ 
    \item For all $j \geq 2$ it holds that
    \begin{equation}
      \label{eq:ri-condition}
        a_{j-1} \omega (r_{j-1} - r_j) +b_j (r_{j+1} - r_{j})\leq 0.
    \end{equation}
  \end{enumerate}
\end{dfn}

\begin{rem}
  Notice that, strictly speaking, a sequence
  $(r_{j})_{j\geq 1}$ with the above properties is not a supersolution
  to \eqref{eq:BD} (in the classical ODEs sense) but rather a
  supersolution of the system:
  \begin{equation}\label{eq:ddtx}
    \ddt x_j(t) =  a_{j-1}\omega\pa{x_{j-1}(t)-x_j(t)}+b_j
    \pa{x_{j+1}(t)-x_{j}(t)}, \qquad j \geq 1
  \end{equation}
  with $x_i(t)\leq \varrho$ for all $i \geq 1,$ $t \geq0$. Notice also
  that, whenever \eqref{eq:diffinequalityforG} holds true,
  $\mathcal{G}(t)$ is a subsolution of \eqref{eq:ddtx} on
  $[t_{0},\infty)$.
\end{rem}

The values of $\omega$ and $\varrho$ that are helpful to obtain a maximum principle are connected to those of
$c_1(t)$ and the mass of $\bm{c}$ in the following way:

\begin{prp}[\textit{\textbf{Maximum principle}}]
  \label{lem:maximum-principle}
  Let $\bm{c}(t) = (c_i(t))_{i \geq 1}$ be a solution to the
  Becker-D\"oring equations with non-negative initial condition
  $\bm{c}(0)$. Assume that \eqref{eq:ai1} and \eqref{eq:bi} hold and
  that the density of $\bm{c}$ is $0 < \varrho < \varrho_s$. Let
  $\mathcal{G}(t)$ denote the tail density of $\bm{c}(t).$ Take
  $\omega >0$ and $0 \leq t_0 < t_1$, and denote $I := [t_0,
  t_1]$. Assume that
  $$c_1(t) \leq
  \omega \qquad \text{ for all }
  \quad t \in I.$$
  Let $(r_j)_{j \geq 1}$ be a $(\omega,\varrho)$-supersolution to the
  associated Becker-D\"oring equations. Then if
  \begin{equation*}
    G_j(t_0) \leq r_j \qquad \text{for all $j \geq 1$,}
  \end{equation*}
  we find that
  \begin{equation*}
    G_j(t) \leq r_j
    \qquad \text{for all $t \in [t_0, t_1]$ and all $j \geq 1$.}
  \end{equation*}
\end{prp}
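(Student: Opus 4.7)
The plan is to reduce the statement to showing that $z_j(t) := G_j(t) - r_j$ satisfies $z_j(t) \leq 0$ for all $j \geq 1$ and $t \in I$, and to establish this via a maximum principle in the spirit of Lemma \ref{lem:max-ode} but adapted to the infinite-dimensional setting. Two sign conditions are available at no cost. The boundary index is automatic: $G_1(t) = \sum_{i \geq 1} c_i(t) \leq \sum_{i \geq 1} i\, c_i(t) = \varrho \leq r_1$, so $z_1(t) \leq 0$ for every $t \in I$, and $z_j(t_0) \leq 0$ for all $j \geq 1$ is part of the hypothesis. Using Lemma \ref{cor:diffinequalityforG} in combination with \eqref{eq:ri-condition}, a direct subtraction gives, for $j \geq 2$ and $t \in I$,
\begin{equation*}
\frac{\d}{\d t} z_j(t) \leq a_{j-1}\omega \pa{z_{j-1}(t) - z_j(t)} + b_j \pa{z_{j+1}(t) - z_j(t)},
\end{equation*}
which is the Metzler-type structure required to run a maximum principle.

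Since the system is infinite-dimensional, Lemma \ref{lem:max-ode} does not apply verbatim. I would therefore introduce the exponential shift $\tilde z_j(t) := \re\pa{-\lambda t}\, z_j(t)$ for an arbitrary $\lambda > 0$. The shifted variables inherit both $\tilde z_1(t) \leq 0$ and $\tilde z_j(t_0) \leq 0$, and they satisfy the strictly dissipative differential inequality
\begin{equation*}
\frac{\d}{\d t} \tilde z_j(t) \leq a_{j-1}\omega \pa{\tilde z_{j-1}(t) - \tilde z_j(t)} + b_j \pa{\tilde z_{j+1}(t) - \tilde z_j(t)} - \lambda \tilde z_j(t), \qquad j \geq 2.
\end{equation*}
The key quantitative input is the uniform-in-$t$ decay $\tilde z_j^+(t) \leq G_j(t) \leq \varrho/j$, which follows from $r_j \geq 0$, from $\re\pa{-\lambda t} \leq 1$ (as $\lambda, t \geq 0$), and from the Markov-type estimate $j G_j(t) \leq \sum_{i \geq j} i\, c_i(t) \leq \varrho$. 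Because each $\tilde z_j$ is continuous on the compact interval $I$ by Lemma \ref{cor:diffinequalityforG}, this uniform decay in $j$ ensures that the supremum $\tilde W := \sup_{t \in I,\, j \geq 1} \tilde z_j(t)$ is actually attained at some pair $(t^*, j^*)$.

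I would then argue by contradiction, assuming $\tilde W > 0$. Since $\tilde z_1 \leq 0$, necessarily $j^* \geq 2$. If $t^* = t_0$, then $\tilde z_{j^*}(t_0) = \tilde W > 0$ contradicts the initial condition $z_{j^*}(t_0) \leq 0$. If instead $t^* \in (t_0, t_1]$, the maximality of $\tilde z_{j^*}$ at $t^*$ forces $\frac{\d}{\d t} \tilde z_{j^*}(t^*) \geq 0$ (zero in the interior, nonnegative left-derivative at the right endpoint). However, the shifted inequality together with $\tilde z_{j^* \pm 1}(t^*) \leq \tilde W = \tilde z_{j^*}(t^*)$ gives $\frac{\d}{\d t} \tilde z_{j^*}(t^*) \leq -\lambda \tilde W < 0$, which is the desired contradiction. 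Hence $\tilde W \leq 0$, i.e. $G_j(t) \leq r_j$ for all $j \geq 1$ and $t \in I$.

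The main obstacle I anticipate is the step of guaranteeing that the supremum defining $\tilde W$ is genuinely attained, so that the argument can be localized at a single index $j^*$. It is precisely the finite density of $\bm{c}$, through the elementary tail bound $G_j(t) \leq \varrho/j$, that converts the $\sup$ into a $\max$; the exponential shift by $\lambda$ then converts the non-strict inequality at the maximum into the strict one that yields the contradiction. Without such a decay one would be forced into a more delicate truncation-and-passage-to-the-limit procedure, where the uncontrolled boundary term $b_N(G_{N+1} - G_N)$ at the cut-off would be the principal source of difficulty.
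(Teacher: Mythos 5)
Your argument is correct and reaches the same conclusion as the paper, but by a genuinely different route. The paper establishes uniform tail smallness via Dini's theorem, subtracts a small constant $\varepsilon$ to make the tail and boundary terms strictly negative, truncates to the finite system $j = 2,\dots,M$, applies the Metzler-matrix maximum principle of Lemma \ref{lem:max-ode}, and finally lets $\varepsilon \to 0$. You instead get the explicit Chebyshev-type bound $G_j(t) \leq \varrho/j$ (which is simpler and quantitative), replace the additive $\varepsilon$-shift by a multiplicative exponential shift in time $\tilde z_j = e^{-\lambda t} z_j$ to produce strict dissipation, and then run a classical ``evaluate at the interior maximum'' argument directly on the infinite system, with no truncation and no need to invoke Lemma \ref{lem:max-ode} at all. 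The exponential shift also spares you the final $\varepsilon \to 0$ passage, since $\tilde W \leq 0$ is already equivalent to $G_j \leq r_j$. The steps all check out: the shifted inequality $\tfrac{\d}{\d t}\tilde z_j \leq a_{j-1}\omega(\tilde z_{j-1}-\tilde z_j) + b_j(\tilde z_{j+1}-\tilde z_j) - \lambda\tilde z_j$ for $j\geq 2$ is correct; when $\tilde W > 0$ the decay $\tilde z_j^+ \leq \varrho/j$ reduces the supremum to finitely many $j$ on the compact interval $I$, so it is attained; the boundary $j=1$ and initial time $t_0$ are excluded by $G_1 \leq \varrho \leq r_1$ and the hypothesis; and at $(t^*,j^*)$ the three sign conditions plus $-\lambda\tilde W < 0$ contradict the nonnegative (one-sided) time derivative at a maximum. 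The trade-off is roughly: the paper's route builds on a reusable finite-dimensional lemma (which they state separately), while yours is shorter and more self-contained, in the spirit of strict parabolic maximum principles, and avoids both Dini and the auxiliary ODE lemma. Either is a complete proof.
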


\begin{proof}
  Since $[t_{0},t_{1}]$ is compact and the sequence $\mathcal{G}(t)=(G_{j}(t))_{j \geq 1}$ is a non-increasing sequence of continuous functions that converge pointwise to zero, we conclude from Dini's Theorem that
  $$\lim_{j\to \infty}\sup_{t \in [t_{0},t_{1}]}G_{j}(t)=0.$$
 Given $\varepsilon >0$, set 
 $$H_{j}(t)=G_{j}(t)-r_{j}-\varepsilon, \qquad \forall t \in [t_{0},t_{1}], \quad j \geq 2.$$
 There exists an $M \geq 1$, independent in $t$, such that
 \begin{equation}
   \label{eq:maximum-large-j}
   H_{j+1}(t) \leq 0
   \qquad \forall t \in [t_{0},t_{1}], \quad j \geq M.
 \end{equation}
 In addition, as $H_1(t) = \varrho - r_1 -\varepsilon$, the condition $\varrho \leq r_1$ implies that $H_1(t) <0$.\\
 Lemma \ref{cor:diffinequalityforG} and condition \eqref{eq:ri-condition} for the supersolution sequence imply that
 \begin{equation}\label{eq:Hj}
    \ddt H_j(t)
    \leq b_j (H_{j+1}(t) - H_{j}(t))
    + a_{j-1} \omega (H_{j-1}(t) - H_j(t))
    \qquad \forall j \geq 2
  \end{equation}
  on $I$. Due to \eqref{eq:maximum-large-j} and the fact that $H_1(t)<0$ we can consider the system \eqref{eq:Hj} for $j= 2,\ldots, M$ only. This system can be rewritten as
  \begin{equation*}
    \ddt \left(
      \begin{array}{c}
        H_2(t)\\H_{3}(t) \\\vdots \\ \vdots \\H_{M-1}(t)\\H_{M}(t)
      \end{array}
    \right)
    \leq
    \left(
      \begin{array}{ccccc}
        -\alpha_{2} & b_{2} & 0 & \cdots & 0 \\
        a_{2} \omega  & -\alpha_{3} & b_{3} & \cdots & 0  \\
        \cdots & \cdots & \cdots & \cdots & \cdots\\ 
        \cdots & \cdots & \cdots & \cdots & \cdots\\
        0 & \cdots & a_{M-2}\omega & -\alpha_{M-1} & b_{M-1} \\
        0 & \cdots & 0 & a_{M-1} w & -\alpha_{M}
      \end{array}
    \right)
    \left(
      \begin{array}{c}
        H_2(t) \\H_{3}(t)
        \\\vdots \\ \vdots
        \\ H_{M-1}(t) \\H_{M}(t)
      \end{array}
    \right),
  \end{equation*}
  where $\alpha_{j} = a_{j-1} \omega + b_{j}$. As all the
  off--diagonal entries of the above matrix are non-negative and since
  our initial conditions imply
  $$H_j(t_0)=G_j(t_0)-r_j - \varepsilon <0,$$   we find that due to our maximum principle (Lemma \ref{lem:max-ode})
  $$H_j(t) \leq 0 \qquad  \qquad \forall \; 2\leq j \leq M, t\in I.$$
  Together with the bounds on $H_1$ and $(H_{j})_{j\geq M+1}$ we conclude that on $I$
  $$G_{j}(t) \leq r_j + \varepsilon.$$
  As $\varepsilon$ was arbitrary, we get our desired result.
\end{proof}

As we can see, if $\varrho$ is the density of $\bm{c}(t)$, the
associated $(\omega,\varrho)$-supersolution will control
$\mathcal{G}(t)$, with appropriate initial conditions. The question
remains as to which $\omega$ one may choose. This choice will be
crucial to \emph{the existence of a supersolution} that bounds
$\mathcal{G}(t)$ at a suitable time. Since in the subcritical case
$c_1(t)$ converges to $\overline{z} < z_s$, it seems natural to choose
$\omega$ close to, but larger than, $\overline{z}$. This is indeed the
required ingredient to construct a supersolution. The following lemma,
which not only gives us the existence of a supersolution but also
gives us moment connections between it and $\mathcal{G}$, is
reminiscent to Lemma 3.4 in \cite{Canizo2005Asymptotic}.

\begin{lem}
  \label{lem:finding-supersolutions}
  Assume that conditions \eqref{eq:ai1} to \eqref{eq:Qi-limit} hold.  Let
  $0 < \varrho < \varrho_s$ and $0 < \omega < z_s$ be given. Consider
  a non-negative, non-increasing sequence $({g}_j)_{j \geq 1}$ that
  tends to $0$ as $j$ goes to infinity and such that
  $g_1 \leq \varrho$. Then, there exists a
  $(\omega, \varrho)$-supersolution $(r_j)_{j \geq 1}$ to the
  associated Becker-D\"oring equations which tends to $0$ as $j$ goes
  to infinity, and satisfies
  \begin{equation*}
    g_j \leq r_j \qquad \forall j \geq 1.
  \end{equation*}
  Moreover, $(r_j)_{j \geq 1}$ can be chosen so that for any $1 \leq \delta < z_s/\omega$ and any positive, eventually non-decreasing sequence $(\phi_j)_{j \geq 1}$ satisfying
  \begin{equation}
    \label{eq:phi-decay}
    \limsup_{j \to +\infty} \frac{\phi_{j}}{\phi_{j-1}}
    \leq
    \delta,
  \end{equation} 
  we have that  
  \begin{equation}
    \label{eq:supersol-moment-bound}
    \sum_{j=1}^\infty \phi_j r_j
    \leq
    C \left( 1 + \sum_{j=1}^\infty \phi_j g_j \right)
  \end{equation}
  where $C > 0$ is a fixed constant that depends only on $(\phi_j)_{j\geq 1}$, $\varrho$, $\omega$, $\delta$ and the coefficients $(a_i)_{i \geq 1}$,
  $(b_i)_{i \geq 1}$.
\end{lem}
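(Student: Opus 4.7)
The plan is to construct $(r_j)$ explicitly via an obstacle-problem-type recipe built around a simple reformulation of the supersolution inequality. Writing $s_j := r_j - r_{j+1}$ for the increments of a non-increasing candidate that decays to zero (so that $r_j = \sum_{k \geq j} s_k$), condition \eqref{eq:ri-condition} is equivalent, after using $a_{j-1}/b_j = Q_j/Q_{j-1}$, to the monotonicity requirement
$$\sigma_j \geq \sigma_{j-1} \quad (j \geq 2), \qquad \text{where } \sigma_j := \frac{s_j}{\omega^{j-1} Q_j}.$$
Thus producing a $(\omega,\varrho)$-supersolution reduces to producing a non-decreasing normalized-increment sequence $(\sigma_j)$ of the right size. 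Given $(g_j)$, set $t_j := g_j - g_{j+1} \geq 0$, $\Pi_j := \omega^{j-1}Q_j$, and define
$$\sigma_j := \max\!\left(\sigma^*,\ \max_{1 \leq \ell \leq j}\frac{t_\ell}{\Pi_\ell}\right), \qquad r_j := \sum_{k \geq j}\sigma_k \Pi_k,$$
with $\sigma^* := \varrho\big/\sum_{k \geq 1}\Pi_k$ (the series converges since $\omega < z_s$, by definition of the critical monomer density).

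By construction $(\sigma_j)$ is non-decreasing, so $(r_j)$ satisfies \eqref{eq:ri-condition}; the lower bound $\sigma_k\Pi_k \geq t_k$ yields $r_j \geq \sum_{k \geq j}t_k = g_j$; and $r_1 \geq \sigma^*\sum_k \Pi_k = \varrho$. The decay $r_j \to 0$ would follow from the summability of $(\sigma_k\Pi_k)$, which I would verify by writing $\sigma_k \Pi_k \leq \sigma^*\Pi_k + t_{\ell^*(k)}(\omega/z_s)^{k-\ell^*(k)}$, where $\ell^*(k)\leq k$ realizes the maximum in the definition of $\sigma_k$ and the bound on the second piece uses the non-increase of $z_s^k Q_k$ from \eqref{eq:critical-eq} to compare $Q_k$ with $Q_{\ell^*(k)}$.

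The core of the proof is the moment bound \eqref{eq:supersol-moment-bound}. By discrete Fubini,
$$\sum_{j \geq 1}\phi_j r_j = \sum_{k \geq 1}\Phi_k s_k, \qquad \sum_{j \geq 1}\phi_j g_j = \sum_{k \geq 1}\Phi_k t_k, \qquad \Phi_k := \sum_{i=1}^k \phi_i.$$
With the same splitting of $s_k$, the $\sigma^*\Pi_k$ piece contributes $\sigma^*\sum_k \Phi_k \Pi_k$, which is finite (hence absorbed as an additive constant) because $\Pi_k$ decays geometrically like $(\omega/z_s)^k$ while $\Phi_k$ grows at most geometrically with ratio $\delta$, and $\delta < z_s/\omega$. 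The remaining piece, after swapping the order of summation, is bounded by $\sum_\ell t_\ell \sum_{k \geq \ell}\Phi_k (\omega/z_s)^{k-\ell}$, and everything rests on the uniform inequality
$$\sum_{k \geq \ell}\Phi_k \left(\frac{\omega}{z_s}\right)^{k-\ell} \leq C\,(1 + \Phi_\ell) \qquad \text{for all } \ell \geq 1.$$
Combined with $\sum_\ell t_\ell = g_1 \leq \varrho$ and the previous contribution, this gives the claim.

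The main obstacle is precisely this last uniform inequality, which is what forces the hypothesis $\delta < z_s/\omega$ and where the constant $C$ depending only on $(\phi_j)$, $\omega$, $\delta$ and the coefficients must be extracted. I would prove it by splitting the sum at $k = 2\ell$: on $\ell \leq k \leq 2\ell$ one uses $\Phi_k \leq C\Phi_\ell$ (coming from the at-most-$\delta$ growth of $\phi_j$ and the fact that $\delta\omega/z_s < 1$), while for $k > 2\ell$ the geometric factor $(\omega/z_s)^{k-\ell}$ dominates the growth of $\Phi_k$, producing an absolutely convergent tail bounded by a constant independent of $\ell$.
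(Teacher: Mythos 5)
Your construction is genuinely different from the paper's and, in my view, cleaner in structure. You exactly characterise the supersolution inequality via the normalised increments $\sigma_j = s_j/(\omega^{j-1}Q_j)$ being non-decreasing, and build $\sigma_j$ as a running maximum; this automatically handles \emph{all} $j$, so there is no need to patch the sequence at small indices. The paper instead picks a fixed ratio $\lambda\in(\delta,z_s/\omega)$, an index $N$ with $b_j\geq\lambda\omega a_{j-1}$ for $j\geq N$, runs the recursion $s_{j+1}=\max\{s_j/\lambda,h_{j+1}\}$ from $j=N$, and then must explicitly extend and verify the supersolution condition for $j< N$. For the moment bound, you use discrete Fubini to reduce to $\sum_k\Phi_k s_k$ and a pointwise geometric-series estimate, whereas the paper uses the two-sided bound $s_j\leq r_j\leq\frac{\lambda}{\lambda-1}(s_j+g_{j+1})$ and summation by parts, leaning on the condition $(\phi_j-\phi_{j-1})/\phi_j\leq 1-1/\delta_*$. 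Both routes prove the same thing; yours is arguably easier to follow because the key inequality is a single clean geometric-sum estimate.

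That said, there are two concrete issues with your write-up. First, you invoke the non-increase of $(Q_kz_s^k)_k$, which is assumption \eqref{eq:critical-eq}, but the lemma's hypotheses are only \eqref{eq:ai1}--\eqref{eq:Qi-limit}. You should instead use \eqref{eq:Qi-limit}: since $Q_{j+1}/Q_j\to 1/z_s$, for any $\epsilon>0$ there is $N_0$ with $Q_k/Q_\ell\leq((1+\epsilon)/z_s)^{k-\ell}$ for $k\geq\ell\geq N_0$; choosing $\epsilon$ so small that $q:=(1+\epsilon)\omega/z_s<1/\delta$ gives the same geometric comparison $\Pi_k/\Pi_\ell\leq Cq^{k-\ell}$ with a constant absorbing the finitely many small indices.

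Second, and more importantly, your proposed proof of the central inequality
\begin{equation*}
\sum_{k\geq\ell}\Phi_k\,q^{k-\ell}\leq C\,(1+\Phi_\ell)\qquad\text{uniformly in }\ell
\end{equation*}
does not work as stated. Splitting at $k=2\ell$ and claiming $\Phi_k\leq C\Phi_\ell$ for $\ell\leq k\leq 2\ell$ is false when $(\phi_j)$ has genuine geometric growth: if $\phi_j\sim\delta'^j$ with $\delta'>1$, then $\Phi_{2\ell}/\Phi_\ell\sim\delta'^\ell\to\infty$. The tail $k>2\ell$ is also not uniformly bounded unless $\delta'^2q<1$, which is not guaranteed by $\delta'q<1$. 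The inequality is nevertheless true, and the correct argument is direct: pick $\delta'\in(\delta,1/q)$ with $\delta'>1$ and $M$ such that $\phi_j\leq\delta'\phi_{j-1}$ for $j>M$. Then for $\ell\geq M$ and $m\geq 1$, $\Phi_{\ell+m}-\Phi_\ell=\sum_{i=1}^m\phi_{\ell+i}\leq\phi_\ell\sum_{i=1}^m\delta'^i\leq\frac{\delta'}{\delta'-1}\delta'^m\phi_\ell\leq\frac{\delta'}{\delta'-1}\delta'^m\Phi_\ell$, whence
\begin{equation*}
\sum_{k\geq\ell}\Phi_k q^{k-\ell}\leq\Phi_\ell\sum_{m\geq0}q^m\left(1+\tfrac{\delta'}{\delta'-1}\delta'^m\right)=\Phi_\ell\left(\tfrac{1}{1-q}+\tfrac{\delta'}{(\delta'-1)(1-\delta'q)}\right),
\end{equation*}
which is finite because $\delta'q<1$; the finitely many $\ell<M$ are handled by the additive $+1$. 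With this repair, and after replacing the pointwise maximum over $\ell\leq k$ by the corresponding sum to justify the summability of $(\sigma_k\Pi_k)$, your argument is complete and correct.
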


\begin{proof}
  According to \eqref{eq:Qi} and \eqref{eq:Qi-limit}, one notices that
  $\lim_{j\to\infty}b_j / a_{j-1}= z_s$, we can find
  $1 < \lambda \in (\delta,\frac{z_{s}}{\omega})$ and $N \geq 1$ such
  that
  \begin{equation*}
    b_{j} \geq \lambda \, \omega \,a_{j-1}
    \qquad \forall j \geq N.
  \end{equation*}
  For $j \geq N$, we set
 \begin{equation*}\begin{cases}
    h_j &:= g_j - g_{j+1} \geq 0
    \\
    s_{N} &:= \frac{\varrho}{\lambda \omega}+ h_{N},
    \quad
    s_{j+1} := \max\left\{
      \dfrac{s_j}{\lambda}, h_{j+1}
    \right\}\end{cases}\end{equation*}
  and define
  \begin{equation*}
    {r}_j := \sum_{\ell = j}^\infty s_\ell.
  \end{equation*}
We will now show that this sequence is well defined and is bounded. Indeed, using the fact that
$$0 \leq h_j \leq g_j \leq g_1 \leq \varrho \qquad  \forall j \geq 1,$$
and the fact that $\frac{\varrho}{\lambda \omega} \leq s_N \leq \varrho\pa{1+\frac{1}{\lambda \omega}}$ and $1 \leq\delta <\lambda$, we can use a simple induction to show that
  \begin{equation*}
    \label{eq:sj-range}
    0 < s_j \leq \varrho\pa{1+\frac{1}{\lambda \omega}}
    \qquad \forall j\geq N.
  \end{equation*}
  Moreover, as $s_{j+1} \leq \frac{s_{j}}{\lambda}+h_{j+1}$ for all
  $j \geq N$, we have that for any $p \geq N$
  \begin{equation*}\label{eq:s_{N}}\begin{split}
    \sum_{j=N}^{p+1}s_{j}&=s_{N}+ \sum_{j=N}^{p}s_{j+1}
    \leq \frac{\varrho}{\lambda \omega}+h_{N}+\sum_{j=N}^{p}h_{j+1}
    + \frac{1}{\lambda}\sum_{j=N}^{p}s_{j}
    \\
    &\phantom{+++}
    \leq \frac{\varrho}{\lambda \omega}+g_{N+1}-g_{p+2} +\frac{1}{\lambda}\sum_{j=N}^{p+1}s_{j}.
  \end{split}\end{equation*}
Due to the non-negativity of $g_j$ and the fact that $1<\lambda$ we
conclude that
$$\sum_{j=N}^{p+1} s_{j} \leq \frac{g_{N+1}+\frac{\varrho}{\lambda \omega}}{\left(1-\frac{1}{\lambda}\right)}.$$ 
As $p$ is arbitrary, this shows that the sum converges and thus that
$(r_j)_{j\geq N}$ is well defined with $\lim_{j\to\infty}r_{j}=0.$
Moreover, using that $g_{N+1} \leq \varrho$ we see from the previous
inequality that
$$r_j \leq \frac{\varrho(\lambda\omega+1)}{\omega(\lambda-1)} \qquad j\geq N.$$
From its definition, $(r_j)_{j\geq N}$ is clearly non-negative and non-increasing. In addition
$$r_j \geq  \sum_{\ell = j}^\infty h_\ell = g_j,$$
where we used the fact that $(g_j)_{j\geq 1}$ goes to zero as $j$ goes to infinity. Due to the choice of $N$, we have that for all $j\geq N+1$
 $$\frac{r_{j-1}-r_{j}}{r_{j}-r_{j+1}}=\frac{s_{j-1}}{s_{j}}
   \leq \lambda \leq \frac{b_{j}}{\omega a_{j-1}}.$$
   All of the above show that we have managed to construct a
   supersolution to the associated Becker-D\"oring equation from the
   point $j=N+1$. We are left with defining it for $j<N$ and to check the supersolution condition for $j\leq N$. We set for any
   $j<N$
  \begin{equation}
    \label{eq:def-rj-<N}
    r_{j} = \max\{\varrho, r_{N}\}.
  \end{equation}
  Clearly, by its definition
  $$g_j  \leq g_1 \leq \varrho \leq r_j,$$
  which also shows that $ \varrho \leq r_1$. In addition, one checks that
$$a_{j-1}\omega (r_{j-1}-r_j)+b_j(r_{j+1}-r_{j})=\begin{cases}
0 & j<N-1 \\
b_{N-1}\pa{r_N-\max\{\varrho, r_{N}\}} \leq 0 & j=N-1 \\
a_{N-1}\pa{\max\{\varrho, r_{N}\}-r_N} - b_{N}s_N \leq 0 & j=N 
\end{cases}.$$
Indeed, the last inequality is valid since, due to the choice of $N$,
$$a_{N-1}\pa{\max\{\varrho, r_{N}\}-r_N} - b_{N}s_N \leq b_N\pa{\frac{\varrho}{\lambda \omega} - s_N} \leq 0.$$
Thus $(1)$ and $(2)$ from Definition \ref{def:super}, are satisfied up
to $j=N$. Together with our definition for $j\geq N$ we conclude
that $(r_{j})_{j\geq 1}$ is an $(\omega,\varrho)$--supersolution to
the associated Becker-D\"oring equation. Moreover,
$$r_j \leq \varrho\max \{1, \frac{\lambda\omega+1}{\omega(\lambda-1)} \}=\frac{\varrho(\lambda\omega+1)}{\omega(\lambda-1)}
, \qquad \forall j \geq 1.$$
 We turn our attention now to the second part of the proof. Due to the conditions on $(\phi_j)_{j\geq 1}$ we can find $\delta < \delta_* < \lambda$ and  $M\geq N \geq 1$ such that for all $j\geq M$
$$\phi_{j-1} \leq \phi_{j}$$
 $$\frac{\phi_j-\phi_{j-1}}{\phi_j} \leq 1-\frac{1}{\delta_*}.$$
 Consider the sum $\sum_{j=1}^\infty r_j \phi_j$. Using again that
 $s_{j+1} \leq \frac{s_j}{\lambda}+h_{j+1}$ for any $j\geq M \geq N$,
 we have that
 $$s_j \leq r_j = \sum_{\ell=j}^\infty s_\ell \leq s_j +
 \frac{1}{\lambda}\sum_{\ell=j}^\infty s_\ell + \sum_{\ell=j}^\infty
 h_{\ell+1} \leq s_j+\frac{r_j}{\lambda} + g_{j+1},
 \qquad j \geq M.$$
Thus
$$s_j \leq r_j \leq \frac{\lambda (s_j+g_{j+1})}{\lambda-1}.$$
From the above we can estimate that
\begin{multline*}\sum_{j=M}^\infty \phi_j r_j \leq \frac{\lambda}{\lambda-1}\pa{\sum_{j=M}^\infty \phi_{j} s_j+\sum_{j=M+1}^\infty \phi_{j-1} g_j}\\ \leq \frac{\lambda}{\lambda-1}\pa{\sum_{j=M}^\infty \phi_{j} \pa{r_j-r_{j+1}}+\sum_{j=M+1}^\infty \phi_{j} g_j }\\
= \frac{\lambda}{\lambda-1}\pa{\sum_{j=M}^\infty r_j \pa{\phi_{j}-\phi_{j-1}} +\phi_{M-1}r_{M} +\sum_{j=M+1}^\infty \phi_{j} g_j }\\
\leq \pa{1+\frac{1}{\lambda-1}}\pa{1-\frac{1}{\delta_*}}\sum_{j=M}^\infty r_j\phi_j  +\frac{\lambda}{\lambda-1}\pa{\phi_{M-1}r_{M}+\sum_{j=M+1}^\infty \phi_{j} g_j},
\end{multline*}
 which implies that
\begin{equation*}\label{eq:momentsrj}
\sum_{j=M}^\infty \phi_j r_j \leq  \frac{\lambda \delta_*}{\lambda-\delta_*}\pa{\phi_{M-1} r_{M}+\sum_{j=M+1}^\infty \phi_{j} g_j }.
\end{equation*}
$$\leq \frac{\lambda \delta_*}{\lambda-\delta_*}\pa{\frac{\varrho(\lambda\omega+1)}{\omega(\lambda-1)}\phi_{M-1}+\sum_{j=M+1}^\infty \phi_{j} g_j}$$
 Thus, as
$$\sum_{j=1}^{M-1} \phi_j r_j \leq\frac{\varrho(\lambda\omega+1)}{\omega(\lambda-1)}\sum_{j=1}^{M-1} \phi_j$$
 we conclude that
 $$\sum_{j=1}^\infty \phi_j r_j  \leq C\pa{1+\sum_{j=1}^\infty \phi_{j} g_j}$$
 where 
$$C=2\max\pa{\frac{\varrho(\lambda\omega+1)}{\omega(\lambda-1)}\sum_{j=1}^{M-1} \phi_j,\frac{\lambda \delta_*}{\lambda-\delta_*}\max\pa{1,\frac{\varrho(\lambda\omega+1)}{\omega(\lambda-1)}\phi_{M-1}}}.$$
 This completes the proof.
\end{proof}

\begin{rem}\label{rem:choiceofphi}
It is important to note that the sequences $(\phi_{j})_{j\geq 1}$ given by
$$\phi_j=j^k \quad (k\geq 0), \qquad \text{ or } 	\qquad \phi_j = \re\pa{j^\mu} \quad (0<\mu<1), \quad j \geq 1$$
both satisfy condition \eqref{eq:phi-decay} with $\delta=1$. This
means that we can build a supersolution with comparable moments, and
stretched exponential moments, to those of $\mathcal{G}(t)$. This will
be a crucial element in the proof of our main theorems. Note that
$\phi_{j}=\re\pa{\eta\,j}$ $(j \geq 1)$ is also allowed, as long as
$\eta < \log\pa{\frac{z_s}{\omega}}$.
\end{rem}

With these tools at hand, we have the main ingredient to prove our main theorem.

\section{On the propagation of moments}
\label{sec:moments}

From the previous section we know that as long as $c_1(t)<z_s$ in a
certain time interval, we are able to construct a supersolution to the
associated Becker-D\"oring equations whose moments are strongly
related to the moments of $\bm{c}(t)$, thus allowing us to take
advantage of the maximum principle in Lemma
\ref{lem:maximum-principle} to obtain a uniform bound. However, the
condition on $c_1(t)$ is not necessarily valid at all times. Before we
prove our main theorems, we show that one can find an explicit time,
$T_0\geq 0$, such that for all $T>T_0$, $c_1(t) <z_s$. Before that
time the moments, and stretched exponential moments, grow at most
exponentially in time (which was already noted in previous works).

The fact that $c_1(t) < z_s$ after a certain time $T_0$ is a
consequence of a stronger statement about the convergence to
equilibrium of the solution to the Becker-D\"oring equations. The
quantitative version we state here uses the relative free energy
mentioned in the introduction and can be easily deduced from results
in \cite{Canizo2007Convergence} and \cite{Canizo2015Trend}:

\begin{thm}
  \label{thm:quantitative-convergence}
  Consider the Becker-D\"oring equations with coagulation and
  fragmentation coefficients $(a_i)_{i\geq1},(b_i)_{i\geq1}$ such that
conditions \eqref{eq:ai1} to \eqref{eq:critical-eq} hold. Assume that
  $\bm{c}(t) = (c_i(t))_{i \geq 1}$ is a solution to the
  Becker-D\"oring equations with non-negative, subcritical initial
  datum $\bm{c}(0)$ satisfying \eqref{eq:c0-moments}. Then, there
  exists a constant $C>0$, depending only on the coagulation and
  fragmentation coefficients, the density $\varrho$, the initial
  moment of $\bm{c}$ of order $\max\{2-\gamma, 1+\gamma\}$ and the
  initial relative free energy, such that
  \begin{equation}
    \label{eq:quantitative-convergence}
    \sum_{i=1}^\infty i\abs{c_i(t) - \Q_i} \leq \frac{C}{\sqrt{1+\abs{\log t} }}, \qquad \forall t > 0.
  \end{equation}
\end{thm}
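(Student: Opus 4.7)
The plan is to reduce \eqref{eq:quantitative-convergence} to the quantitative entropy-decay results of \cite{Canizo2007Convergence} (for $0\leq\gamma<1$) and \cite{Canizo2015Trend} (for the linear case $a_i\sim i$), and then to close the argument by a Csisz\'ar--Kullback--Pinsker type inequality together with a H\"older interpolation against a local-in-time algebraic moment. The underlying mechanism is standard in this area: under the mild moment assumption \eqref{eq:c0-moments}, the relative free energy $H(\bm{c}(t)|\bm{\Q})$ decays quantitatively even without any uniform moment control, and CKP converts that entropy decay into a decay for the $\ell^1$-type norm on the left-hand side of \eqref{eq:quantitative-convergence}, while the weight $i$ is absorbed by interpolation.

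I would carry this out in three short steps. \emph{Step~1.} Check that $H(\bm{c}(0)|\bm{\Q})$ is finite; this follows from \eqref{eq:c0-moments}, the subcriticality $\varrho<\varrho_s$, and the geometric decay of $\Q_i=Q_i\bar z^{\,i}$ with $\bar z<z_s$ implied by \eqref{eq:Qi-limit}, along the lines of Lemmas~7.1--7.2 in \cite{Canizo2007Convergence}. \emph{Step~2.} Invoke the main convergence theorem of \cite{Canizo2007Convergence} (resp.\ \cite{Canizo2015Trend}) under the same hypotheses \eqref{eq:ai1}--\eqref{eq:critical-eq} and \eqref{eq:c0-moments} to obtain an entropy decay rate of the form
\begin{equation*}
  H(\bm{c}(t)|\bm{\Q}) \leq \frac{C}{1+|\log t|}, \qquad t>0,
\end{equation*}
with $C$ depending only on the coefficients, on $\varrho$, on $M_r(0)$ and on $H(\bm{c}(0)|\bm{\Q})$. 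The logarithmic rate is precisely what the entropy--entropy-production method delivers when only an algebraic moment of order $r=\max\{2-\gamma,1+\gamma\}$ is a priori controlled. \emph{Step~3.} Apply the classical CKP inequality
\begin{equation*}
  \sum_{i\geq 1} |c_i(t)-\Q_i| \leq \sqrt{2\, H(\bm{c}(t)|\bm{\Q})},
\end{equation*}
combined with a H\"older interpolation with exponent $\theta\in(0,1)$ chosen so that $1/(1-\theta)\leq r$:
\begin{equation*}
  \sum_{i\geq 1} i\,|c_i(t)-\Q_i|
  \leq \Bigl(\sum_{i\geq 1}|c_i(t)-\Q_i|\Bigr)^{\theta}
       \Bigl(\sum_{i\geq 1} i^{1/(1-\theta)}|c_i(t)-\Q_i|\Bigr)^{1-\theta}.
\end{equation*}
The first factor decays like $(1+|\log t|)^{-\theta/2}$ thanks to Step~2, while the second is bounded on any finite time interval by Lemma~\ref{lem:short-time} (short-time moment propagation from \cite{BCP86}) together with the geometric decay of $\Q_i$. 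Splitting the time axis into $t\in[0,1]$ (where \eqref{eq:quantitative-convergence} is trivial up to enlarging $C$) and $t\geq 1$ then yields the announced rate $1/\sqrt{1+|\log t|}$.

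The main obstacle is not conceptual but a matter of bookkeeping: one has to extract from \cite{Canizo2007Convergence,Canizo2015Trend} the precise logarithmic entropy rate under exactly the assumptions used here, and verify that all the constants depend only on the quantities listed in the statement. In particular, no use of Theorem~\ref{thm:moment-propagation} may enter the proof, since \eqref{eq:quantitative-convergence} is invoked in Section~\ref{sec:moments} precisely to produce the a priori upper bound on $c_1(t)$ that drives the proofs of the moment-propagation theorems; any uniform-in-time moment bound used in Step~3 must therefore be either a short-time bound from \cite{BCP86} or an estimate already established in \cite{Canizo2007Convergence,Canizo2015Trend}. Once this circularity is carefully excluded, the statement is a direct assembly of known ingredients, which is exactly why the authors describe it as easily deduced from the cited works.
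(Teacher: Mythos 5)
The paper's own proof of Theorem~\ref{thm:quantitative-convergence} is essentially a one-line citation: the inequality \eqref{eq:quantitative-convergence} is stated directly, with the $i$-weighted $\ell^1$ norm already in place, in \cite{Canizo2007Convergence} for $\gamma\in[0,1)$ and in \cite{Canizo2015Trend} for $\gamma=1$ (where the rate is in fact exponential). There is thus no Step~3 to perform in the paper: the cited results are not entropy-decay estimates that still need to be converted, but already give the weighted-$\ell^1$ decay.

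Your concern about circularity is correct and well-placed, but your proposed Step~3 does not actually work. The H\"older interpolation
\begin{equation*}
  \sum_{i\geq 1} i\,|c_i(t)-\Q_i|
  \leq \Bigl(\sum_{i\geq 1}|c_i(t)-\Q_i|\Bigr)^{\theta}
       \Bigl(\sum_{i\geq 1} i^{1/(1-\theta)}|c_i(t)-\Q_i|\Bigr)^{1-\theta}
\end{equation*}
requires the second factor to be bounded \emph{uniformly} on $[1,\infty)$, but Lemma~\ref{lem:short-time} only gives a bound of the form $e^{C_\phi t}M_\phi(\bm c(0))$, which grows exponentially in $t$. The product of the two factors is then of size $e^{C(1-\theta)t}(1+|\log t|)^{-\theta/2}$, which diverges, so the split $[0,1]\cup[1,\infty)$ does not close. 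A uniform moment bound is exactly what Theorem~\ref{thm:moment-propagation} supplies, and you are right that it cannot be used here.

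The standard device that sidesteps the uniform moment issue---and that is implicit in \cite{Canizo2007Convergence}---is mass conservation. Since $\sum_i i\,c_i(t)=\sum_i i\,\Q_i=\varrho$, one has $\sum_i i\,(c_i(t)-\Q_i)=0$, hence
\begin{equation*}
  \sum_{i\geq 1} i\,|c_i(t)-\Q_i| = 2\sum_{i:\,c_i(t)<\Q_i} i\,(\Q_i-c_i(t))
  \leq 2K\sum_{i<K}|c_i(t)-\Q_i| + 2\sum_{i\geq K} i\,\Q_i
\end{equation*}
for any truncation level $K$. The first term is controlled by CKP and the entropy decay, the second by the geometric decay of $\Q_i=Q_i\bar z^{\,i}$ with $\bar z<z_s$, which is a \emph{fixed} quantity requiring no information about the solution at positive times. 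Optimising in $K$ then delivers the stated logarithmic rate without ever invoking a uniform-in-time moment of $\bm c(t)$. If you want to reconstruct the internal mechanism rather than cite the reference, this is the argument to use in place of your Step~3.
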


\begin{proof}
  The result in \cite{Canizo2007Convergence} is valid under
  \eqref{eq:ai1}--\eqref{eq:c0-moments} for $\gamma \in [0,1)$ (and
  holds true for more general discrete coagulation models). For
  $\gamma=1$ (i.e., the second option in \eqref{eq:ai1}), the rate is
  actually exponential thanks to a recent result by the
  authors: see Theorem 1.3 in \cite{Canizo2015Trend}. Notice that for
  the case $\gamma=1$ no assumptions on the propagation of moments are
  needed in \cite{Canizo2015Trend}.
\end{proof}

As a consequence we have:

\begin{cor}\label{cor:timeforc1}
  Assume the conditions of Theorem
  \ref{thm:quantitative-convergence}. Then for any $\delta>0$ one has
  $$c_1(t)<\overline{z}+\delta \qquad \text{ for all } t>T_\delta,$$ 
  where $T_\delta=\max\pa{1,\re\pa{\frac{C^2}{\delta^2}-1}}$, with
  $C>0$ is the explicit constant from Theorem
  \ref{thm:quantitative-convergence}.
\end{cor}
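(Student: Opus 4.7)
The plan is to derive this essentially as a direct consequence of Theorem \ref{thm:quantitative-convergence} applied to the single index $i = 1$. The key observation is that every term in the weighted sum in \eqref{eq:quantitative-convergence} is non-negative, so one can bound the individual entry $|c_1(t) - \mathcal{Q}_1|$ by the entire sum. Recalling that $Q_1 = 1$ by \eqref{eq:Qi}, we have $\mathcal{Q}_1 = Q_1 \overline{z}^1 = \overline{z}$, which links the first component of the equilibrium sequence to the monomer density $\overline{z}$ that appears in the statement.

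First I would write
\begin{equation*}
  c_1(t) - \overline{z} \leq |c_1(t) - \mathcal{Q}_1| \leq \sum_{i=1}^\infty i \, |c_i(t) - \mathcal{Q}_i| \leq \frac{C}{\sqrt{1 + |\log t|}},
\end{equation*}
which is valid for all $t > 0$ by Theorem \ref{thm:quantitative-convergence}. Then I would require the right-hand side to be strictly less than $\delta$, which amounts to the algebraic condition $|\log t| > C^2/\delta^2 - 1$. Restricting attention to $t \geq 1$ (so that $|\log t| = \log t$) and solving for $t$ yields $t > \exp(C^2/\delta^2 - 1)$. Taking $T_\delta := \max\{1, \exp(C^2/\delta^2 - 1)\}$ covers both the case in which $C^2/\delta^2 - 1 \leq 0$ (where $t > 1$ already suffices) and the case in which it is positive.

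There is no real obstacle here: the only minor care is to verify that the constant $C$ in Theorem \ref{thm:quantitative-convergence} is the same explicit constant referenced in the statement of the corollary, and to note that in the $\gamma = 1$ case the bound is actually exponential, so the (weaker) logarithmic bound still applies and the same $T_\delta$ works a fortiori. No differential inequalities, moment estimates, or additional structural assumptions beyond those already in force are needed.
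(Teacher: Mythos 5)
Your proof is correct and follows essentially the same route as the paper's: bound $|c_1(t)-\overline{z}|$ by the weighted $\ell^1$ distance in \eqref{eq:quantitative-convergence} and solve the resulting inequality for $t$. You have simply spelled out the algebra that the paper leaves implicit.
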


\begin{proof}
  Since
  $$\abs{c_1(t)-\overline{z}} \leq  \sum_{i=1}^\infty i\abs{c_i(t) - \Q_i} \leq \frac{C}{\sqrt{1+\abs{\log t} }}$$
the result follows immediately. 
\end{proof}
The last issue that we need to deal with before being able to tackle
our main theorem is the issue of the possible growth of our moments,
and stretched exponential moments, in the time until $c_1(t)$ is in
the right range to use our machinery from the Section \ref{sec:maximum}. This
has actually been shown in \cite{BCP86}:

\begin{lem}
  \label{lem:short-time}
  Consider the Becker-D\"oring equations with coagulation and
  fragmentation coefficients $(a_i)_{i\geq1},(b_i)_{i\geq1}$ such that
 \eqref{eq:ai1} and \eqref{eq:bi} hold true. Let $(\phi_{i})_{i\geq 1}$ be a
  non-negative sequence such that
  \begin{equation}
    \label{eq:ph}
    \begin{cases}
      \phi_{i+1}-\phi_{i} \geq \varepsilon \phi_{1} \qquad \forall i \geq 1 \\
      \sup_{i \geq 1}\dfrac{a_{i}\left(\phi_{i+1}-\phi_{i}\right)}{\phi_{i}}=\A_{\phi} < \infty,
    \end{cases}
  \end{equation}
  for some $\varepsilon > 0$. Then, if $\bm{c}(t)$ is the solution to
  the Becker-D\"oring equations with non-negative initial datum
  $\bm{c}(0)$ and density $\varrho$ such that
  $$M_{\phi}(\bm{c}(0)) := \sum_{i=1}^\infty \phi_i c_{i}(0) < \infty,$$
  there exists a positive constant $C_{\phi}$ depending on
  $\A_{\phi},\varepsilon$ and $\varrho$ such that
\begin{equation*}
  M_{\phi}\pa{\bm{c}(t)}:=\sum_{i=1}^{\infty}\phi_{i}c_{i}(t) \leq  \re\pa{C_\phi t}M_{\phi}\pa{\bm{c}(0)}, \qquad \forall t\geq 0.
  \end{equation*}
\end{lem}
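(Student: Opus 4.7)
The plan is to test the weak form \eqref{eq:BD-weak} of the Becker--D\"oring equation against the weight $(\phi_i)_{i\geq 1}$ and derive a closed differential inequality of the shape $\ddt M_\phi(t)\leq C_\phi\,M_\phi(t)$, to which Gronwall's lemma then applies. Concretely, \eqref{eq:BD-weak} reads
\[
\ddt M_\phi(t)
= \sum_{i=1}^\infty a_i c_1(t) c_i(t)\bigl(\phi_{i+1}-\phi_i-\phi_1\bigr)
- \sum_{i=1}^\infty b_{i+1}c_{i+1}(t)\bigl(\phi_{i+1}-\phi_i-\phi_1\bigr),
\]
and I would estimate the ``gain'' and ``loss'' sums separately. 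A preliminary truncation/dominated convergence argument, based on the local-in-time estimates of \cite{BCP86}, justifies interchanging the time derivative with the infinite sum.

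For the gain term, $\phi_{i+1}-\phi_i-\phi_1\leq \phi_{i+1}-\phi_i$, together with the second half of \eqref{eq:ph} and the a priori bound $c_1(t)\leq \varrho$, gives
\[
\sum_{i=1}^\infty a_i c_1 c_i\,(\phi_{i+1}-\phi_i-\phi_1)
\;\leq\; c_1(t)\sum_{i=1}^\infty \A_\phi\,\phi_i\, c_i
\;\leq\; \varrho\,\A_\phi\, M_\phi(t).
\]
For the loss term, $\phi_{i+1}-\phi_i-\phi_1\geq -\phi_1$, assumption \eqref{eq:bi}, the bound $a_{i+1}\leq \overline a (i+1)$ derived from \eqref{eq:ai1} using $\gamma\leq 1$, and mass conservation yield
\[
-\sum_{i=1}^\infty b_{i+1} c_{i+1}\,(\phi_{i+1}-\phi_i-\phi_1)
\;\leq\; \phi_1 \sum_{i=1}^\infty b_{i+1}c_{i+1}
\;\leq\; \phi_1\,\overline a\,\overline b\, \varrho.
\]

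The subtle point, which I expect to be the main obstacle, is to absorb this additive $\phi_1\,\overline a\,\overline b\,\varrho$ term into a multiple of $M_\phi(t)$, so that the constant in the exponential depends only on $\A_\phi,\varepsilon,\varrho$ and the coefficients. The trick is to iterate the first half of \eqref{eq:ph}, giving $\phi_i\geq \phi_1\bigl(1+\varepsilon(i-1)\bigr)$, and then to sum against $c_i(t)$, using $\sum_i c_i(t)\leq \sum_i i c_i(t)=\varrho$ and a short case split on whether $\varepsilon\leq 1$ or $\varepsilon>1$, to obtain the lower bound
\[
M_\phi(t)\;\geq\; \phi_1\,\varrho\,\min(1,\varepsilon).
\]
Hence the loss contribution is at most $\overline a\,\overline b\, M_\phi(t)/\min(1,\varepsilon)$, and combining with the gain estimate gives
\[
\ddt M_\phi(t)\;\leq\; \left(\varrho\,\A_\phi+\frac{\overline a\,\overline b}{\min(1,\varepsilon)}\right) M_\phi(t).
\]
Gronwall's inequality then delivers the claimed exponential bound with $C_\phi:=\varrho\,\A_\phi+\overline a\,\overline b/\min(1,\varepsilon)$, which depends only on $\A_\phi,\varepsilon,\varrho$ and the coagulation/fragmentation coefficients.
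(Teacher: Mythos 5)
Your proposal is correct and follows the same overall skeleton as the paper's proof: pair the weak form \eqref{eq:BD-weak} with $(\phi_i)_{i\geq 1}$, split the resulting derivative of $M_\phi$ into gain and loss contributions, and close with Gronwall; the gain estimate $\sum_i a_i c_1 c_i(\phi_{i+1}-\phi_i)\leq\varrho\,\A_\phi\,M_\phi(t)$ is identical in both. Where you diverge is the handling of the loss term $\phi_1\sum_i b_{i+1}c_{i+1}$. You bound it additively by $\phi_1\,\overline{a}\,\overline{b}\,\varrho$ (using $b_{i+1}\leq\overline{b}\,a_{i+1}\leq\overline{a}\,\overline{b}\,(i+1)$ and mass conservation) and then recover the multiplicative structure needed for Gronwall via the lower bound $M_\phi(t)\geq\phi_1\,\varrho\,\min(1,\varepsilon)$, obtained by telescoping the growth condition into $\phi_i\geq\phi_1(1+\varepsilon(i-1))$ and a short case split on $\varepsilon$; both steps are correct. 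The paper takes a more direct route that never leaves the multiplicative form and never invokes a linear upper bound on $a_i$: after applying \eqref{eq:bi} to get $\phi_1\,\overline{b}\sum_{i\geq 2}a_ic_i$, it uses the first half of \eqref{eq:ph} in the form $\phi_1\leq\varepsilon^{-1}(\phi_{i+1}-\phi_i)$ to write this as $\varepsilon^{-1}\overline{b}\sum_i a_ic_i(\phi_{i+1}-\phi_i)\leq\varepsilon^{-1}\overline{b}\,\A_\phi\,M_\phi(t)$, arriving at $C_\phi=(\varrho+\varepsilon^{-1}\overline{b})\A_\phi$. Both arguments are valid; yours produces a slightly different constant (involving $\overline{a}$ as well), whereas the paper's trick is a bit more economical and avoids the intermediate lower bound on $M_\phi$.
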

\begin{proof} 
  A detailed proof of the result is given in \cite{BCP86}. For the
  sake of completeness we provide a formal proof here, from which a fully
  rigorous one can be obtained by standard approximation
  arguments. Using \eqref{eq:BD-weak} with the sequence
  $(\phi_{i})_{i \geq 1}$ we get
  \begin{equation*}
    \begin{split}
      \ddt \sum_{i=1}^\infty \phi_{i} c_{i}(t)
      &= \sum_{i=1}^\infty  \left(a_i c_1(t) c_i(t)-b_{i+1}c_{i+1}(t)\right)
      (\phi_{i+1}-\phi_{i}-\phi_{1})\\
      &\leq
      \sum_{i=1}^\infty a_i c_1(t) c_i(t)
      (\phi_{i+1}-\phi_{i})
      + \sum_{i=1}^{\infty}b_{i+1}c_{i+1}(t) \phi_{1}
    \end{split}
  \end{equation*}
  where we used that $(\phi_{i})_{i\geq 1}$ is non-negative. The first
  sum is estimated with \eqref{eq:ph}:
  $$\sum_{i=1}^\infty a_i c_1(t) c_i(t) (\phi_{i+1}-\phi_{i}) \leq \A_{\phi}c_{1}(t) \sum_{i=1}^{\infty}\phi_{i}c_{i}(t) \leq \varrho \A_{\phi} \sum_{i=1}^{\infty}\phi_{i}c_{i}(t),$$
  while using \eqref{eq:ph} and \eqref{eq:bi} we find that
  $$\sum_{i=1}^{\infty} b_{i+1}c_{i+1}(t) \phi_{1}
  \leq
  \phi_{1}\overline{b}
  \sum_{i=2}^\infty a_{i}c_{i}(t)
  \leq \varepsilon^{-1}\overline{b}
  \sum_{i=2}^{\infty}a_{i}c_{i}(t)(\phi_{i+1}-\phi_{i})
  \leq  \varepsilon^{-1}\overline{b}\A_{\phi}
  \sum_{i=2}^{\infty}\phi_{i}c_{i}(t).$$
 The result then follows with
 $C_{\phi}=\left(\varrho+\varepsilon^{-1}\overline{b}\right)\A_{\phi}.$
\end{proof}

\begin{rem}\label{rem:momentcontrol}
  The two main types of moments we consider, namely
  \begin{equation*}
    (\phi_j)_{j\geq1}=(j^k)_{j\geq 1}
    \quad (k\geq 1), \quad \text{ and }
    \quad (\phi_j)_{j\geq 1}=\pa{\re\pa{j^\mu}}_{j\geq 1}
    \quad (0\leq  \mu\leq 1-\gamma),
  \end{equation*}
  satisfy the assumptions of Lemma \ref{lem:short-time}. On the
  contrary, the previous lemma cannot be applied to exponential
  moments (with weight $e^{\mu i}$) if $a_i$ diverges to $+\infty$
  with $i$.
\end{rem}

We are now ready to prove our main theorems.
\begin{proof}[Proof of Theorem \ref{thm:moment-propagation}]
  Since we are dealing with a subcritical solution, using Corollary
  \ref{cor:timeforc1} we can find an explicit time $T_0>0$ such that  
  $$c_1(t) < \omega <z_s
  \qquad \text{for any $t>T_0$.}$$
  Due to Lemma \ref{lem:short-time} we can find an explicit constant
  $C_k>0$ such that for all $t\leq T_0$
$$M_{k}(T_{0}) \leq \re\pa{C_k t} M_{k}(0).$$
Considering the tail density sequence $\mathcal{G}(t)$, we use Lemma \ref{lem:finding-supersolutions} to find a supersolution to the associated Becker-D\"oring equation for $t \geq T_0$ such that
$$G_j(T_0) \leq r_j \qquad \forall j\geq 1$$
and
  \begin{equation}\label{eq:momentmainI}
  \begin{split}
   \sum_{j=1}^\infty j^{k-1} r_j &\leq C\pa{1+\sum_{j=1}^\infty j^{k-1} G_j(T_0)} \\
   &\leq C\pa{1+M_{k}\pa{\bm{c}(T_0)}} \leq  C\pa{1+ M_k(0) \re\pa{C T_{0}}},
    \end{split}
  \end{equation}
  where we have used Lemma \ref{lem:propertiesofpartial} and \ref{lem:short-time}. According to the maximum principle, Proposition \ref{lem:maximum-principle}, and the fact that $c_1(t) < \omega$ for $t>T_0$ we find that 
  \begin{equation*}
    G_j(t) \leq r_j
    \qquad \text{for all $t \geq T_0$, for all $j \geq 1$},
  \end{equation*}
and thus, using Lemma \ref{lem:propertiesofpartial} again, and \eqref{eq:momentmainI}, we have that for all $t\geq T_0$
  \begin{equation*}
    M_k(t)
    \leq (k+1) \sum_{j=1}^\infty j^{k-1} G_j(t)
    \leq (k+1) \sum_{j=1}^\infty j^{k-1} r_j
    \leq C\pa{1+ M_k(0) \re\pa{C T_{0}}}.
  \end{equation*}
  This concludes he proof.
\end{proof}

\medskip

\begin{proof}[Proof of Theorem \ref{thm:moment-propagation2}]
  We set
  $$\E_{\mu}(t)=\sum_{i=1}^{\infty}
  \re\pa{\alpha\,i^{\mu}}c_{i}(t), \qquad t \geq 0.$$
  The link between $\E_{\mu}(t)$ and the tail density $\mathcal{G}(t)$
  is given by \eqref{eq:doublebound}, namely
  \begin{equation}\label{eq:doublebound1}  
    \eta_{1}\sum_{j=1}^{\infty} \psi_j G_{j}(t)
  \leq
  \E_{\mu}(t)
  \leq
  \eta_{2}\sum_{j=1}^{\infty} \psi_j G_{j}(t),
  \qquad \forall t \geq 0.
  \end{equation}
  for some positive constants $\eta_{1},\eta_{2} >0$ depending only on $\alpha,\mu$ and $(\psi_{j})_{j\geq 1}:=\pa{\alpha \mu j^{\mu-1} \re\pa{\alpha j^{\mu}}}_{j\geq 1}$. At this point we just mimic the proof of Theorem \ref{thm:moment-propagation}: We find an explicit time  $T_0>0$ such that for any $t>T_0$ we have that
$c_1(t) < \omega <z_s.$ Then, using Lemma \ref{lem:finding-supersolutions}, and noting that our $\psi_j$ satisfies its conditions, we find a supersolution to the associated Becker-D\"oring equation, $(r_j)_{j\geq 1}$ such that $G_{j}(T_{0}) \leq r_{j}$ $(j \geq 1)$ and
$$\sum_{j=1}^{\infty}\psi_{j}r_{j} \leq C(1+\sum_{j=1}^{\infty}\psi_{j}G_{j}(T_{0})) \leq C(1+\eta^{-1}_{1}\E_{\mu}(T_{0})).$$
Invoking Proposition \ref{lem:maximum-principle}, we get  $G_j(t) \leq r_j$ for all $t \geq T_{0}$ and all $j \geq 1.$ Using again \eqref{eq:doublebound1}, we have then
$$\E_{\mu}(t) \leq \eta_{2}\sum_{j=1}^{\infty}\psi_{j}G_{j}(t) \leq \eta_{2}\sum_{j=1}^{\infty}\psi_{j}r_{j} \leq C\,\eta_{2}(1+\eta^{-1}_{1}\E_{\mu}(T_{0})), \quad \forall t \geq T_{0}.$$
completing the proof by using Lemma \ref{lem:short-time} with
$\phi_i=\re\pa{\alpha i^\mu}$ $(i \geq1)$.
\end{proof}

\section*{Acknowledgement}
JAC was supported by project MTM2014-52056-P, funded by the Spanish
government and the European Regional Development Fund. AE was
supported by the Austrian Science Fund (FWF) grant M 2104-N32. We
would like to thank André Schlichting for useful comments on our
previous work, which led to us realising that no uniform bounds for
moments were available.


\begin{thebibliography}{99}

\bibitem{ACGM}
R. Alonso, J. A. Ca\~nizo, I. Gamba and C. Mouhot. A new approach
  to the creation and propagation of exponential moments in the Boltzmann
  equation. \emph{Comm. Partial Differential Equations}. \textbf{38} (2013) 155--169.

\bibitem{Amann90}
H. Amann. \emph{Ordinary
  differential equations: an introduction to nonlinear analysis} (Volume 13 of  \emph{De Gruyter Studies in Mathematics}. Berlin: de Gruyter, 1990).

\bibitem{Ball1988Asymptotic}
 J. M. Ball and J. Carr. Asymptotic
  behaviour of solutions to the Becker-D\"oring equations for arbitrary
  initial data. \emph{Proc. Roy. Soc. Edinburgh Sect. A}. \textbf{108} (1988), 109--116.

\bibitem{BCP86}
J. M. Ball, J. Carr and O. Penrose. The Becker-D\"oring
  cluster equations: Basic properties and asymptotic behaviour of solutions.
 \emph{Comm. Math. Phys.}. \textbf{104} (1986), 657--692.

\bibitem{BD35}
R. Becker and W. D\"oring. Kinetische
  Behandlung der Keimbildung in \"ubers\"attigten D\"ampfen. \emph{Ann. Phys.}, \textbf{416} (1935) 719--752.
  
\bibitem{Bobylev1996Moment}
A. V. Bobylev. Moment inequalities for
  the Boltzmann equation and applications to spatially homogeneous problems. \emph{J. Statist. Phys.}, \textbf{88} (1996), 1183--1214.

\bibitem{Canizo2005Asymptotic}
J. A. Ca\~nizo. Asymptotic behavior
  of the generalized Becker-D\"oring equations for general initial data. \emph{Proc. R. Soc. Lond. Ser. A Math. Phys. Eng. Sci.}. \textbf{461} (2005), 3731--3745.

\bibitem{Canizo2007Convergence}
J. A. Ca\~nizo. Convergence to
  equilibrium for the discrete Coagulation-Fragmentation equations with
  detailed balance. \emph{J. Statist. Phys.}. \textbf{129} (2007), 1--26.

\bibitem{Canizo2013Exponential}
J. A. Ca\~nizo and B. Lods. Exponential
  convergence to equilibrium for subcritical solutions of the
  Becker-D\"oring equations. \emph{J. Differential Equations}. \textbf{255} (2013), 905--950.

\bibitem{Canizo2015Trend}
J. A. Ca\~nizo, A. Einav and B. Lods. Trend to Equilibrium for the
  Becker-D\"oring Equations: An Analogue of Cercignani's Conjecture. \emph{Analysis \& PDE}. To appear 2017.

\bibitem{JN03}
P.-E. Jabin and B. Niethammer. On the rate of
  convergence to equilibrium in the Becker-D\"oring equations. \emph{J. Differential Equations}. \textbf{191} (2003), 518--543.
  
\bibitem{LPMS}
P. Lauren\c{c}ot and S. Mischler. From the
  Becker-D\"oring to the Lifshitz-Slyozov-Wagner equations. \emph{J. Statist. Phys.}. \textbf{106} (2002), 957--991.

\bibitem{MP2016}
R. W. Murray and R. L. Pego. Algebraic decay to
  equilibrium for the Becker--D\"oring equations. \emph{SIAM J. Math. Anal.}. \textbf{48} (2016), 2819--2842.

\bibitem{MP2017}
R. W. Murray and R. L. Pego. Cutoff estimates for the
  Becker-D\"oring equations. Preprint, 2016.

\bibitem{N08}
B. Niethammer. On the evolution
  of large clusters in the Becker-D\"oring model. \emph{J. Nonlinear Sci.}. \textbf{13} (2008), 115--122.
  
\bibitem{Pen97}
O. Penrose. The Becker-D\"oring equations at large times and their connection with the LSW theory of
  coarsening. \emph{J. Statist. Phys.}. \textbf{89} (1997), 305--320.

\bibitem{Penrose1989Metastable}
O. Penrose. Metastable states for the Becker-D\"oring cluster equations. \emph{Comm. Math. Phys.}. \textbf{124} (1989), 515--541.

\bibitem{Sch16}
A. Schlichting. Macroscopic limit of the Becker-D\"oring equation via gradient flows. Preprint, 2016.

\bibitem{slemrod}
M. Slemrod. The Becker-D\"oring equations. In \emph{Modeling in applied sciences} (ed. N. Bellomo and M. Pulvirenti). Birkh\"auser Boston, pp. 149--171. (Boston: Springer, 2000).

\bibitem{Velazquez1998BeckerDoring}
J. J. L. Vel\'azquez. The Becker-D\"oring equations and the Lifshitz-Slyozov theory of
  coarsening. \emph{J. Statist. Phys.}. \textbf{92} (1998), 195--236.

\end{thebibliography}

\bibliographystyle{plain}

   \end{document}